\newcommand{\FTCpm}{\ensuremath{\mathsf{AxFTCpm}}}
\newcommand{\FTCcm}{\ensuremath{\mathsf{AxFTCcm}}}
\newcommand{\axname}[1]{\textup{\ensuremath{\textrm{#1}}}}
\newcommand{\CLSCL}{\axname{C$\ell$SCL}}
\newcommand{\CLSCLtwo}{\axname{C$\ell$SCL$_2$}}
\newcommand{\CLe}{\axname{EqCL}}
\newcommand{\notmodels}{\mathbin{{\,!\,}{\models}}}
\newcommand{\unmodels}{\mathbin{\hspace{.21mm}{\uparrow}\hspace{.15mm}{\models}}}
\newcommand{\tr}{\ensuremath{{\sf T}}}
\newcommand{\fa}{\ensuremath{{\sf F}}}
\newcommand{\und}{\ensuremath{{\sf U}}}
\newcommand{\eqp}{=_{\mathsf{p}}}
\newcommand{\nep}{\ne_{\mathsf{p}}}
\newcommand{\eqc}{=}
\newcommand{\existsp}{\exists_{\mathsf{p}}}
\newcommand{\forallp}{\forall_{\mathsf{p}}}
\newcommand{\true}{\ensuremath{\textsf{true}}}
\newcommand{\false}{\ensuremath{\textsf{false}}}
\newcommand{\undefi}{\ensuremath{\textsf{undefined}}}
\newtheorem{theorem}{Theorem}[subsection]
\newtheorem{la}[theorem]{Lemma}  
\newtheorem{prn}[theorem]{Proposition}  
\newtheorem{cvn}[theorem]{Convention}
\newtheorem{dfn}[theorem]{Definition}
\newcommand{\rat}{\mathbb{Q}}
\newcommand{\leftand}{~
     \mathbin{\setlength{\unitlength}{.9ex}
     \begin{picture}(1.6,1.8)(-.4,0)
     \put(-.8,0){$\wedge$}
     \put(-.66,-0.2){\textcolor{white}{\circle*{0.7}}}
     \put(-.66,-0.2){\circle{0.7}}
     \end{picture}
     }}
\newcommand{\fulland}{~
     \mathbin{\setlength{\unitlength}{.9ex}
     \begin{picture}(1.6,1.8)(-.4,0)
     \put(-.8,0){$\wedge$}
     \put(-.66,-0.2){\circle*{0.7}}
     \end{picture}
     }}
\newcommand{\leftor}{~
     \mathbin{\setlength{\unitlength}{.9ex}
     \begin{picture}(1.6,1.8)(-.4,0)
     \put(-.8,0){$\vee$}
     \put(-.71,1.7){\textcolor{white}{\circle*{0.7}}}
     \put(-.71,1.7){\circle{0.7}}
     \end{picture}
     }}
\newcommand{\leftimp}{~
     \mathbin{\setlength{\unitlength}{1ex}
     \begin{picture}(1.8,1.8)
     \put(-.0,0){$\rightarrow$}
     \put(-.1,0.57){\circle{0.6}}
     \end{picture}
     ~}}
\newcommand\twoheaddownarrow{\mathrel{\rotatebox[origin=c]{-90}{$\twoheadrightarrow$}}}
\newcommand{\Sigpm}{\ensuremath{\Sigma^{pd}_m}}
\newcommand{\Lsfolsig}{\ensuremath{\mathsf{L_{sfol}}(\Sigma)}}
\newcommand{\Lsfolpm}{\ensuremath{\mathsf{L_{sfol}}(\Sigpm)}}
\newcommand{\ULsig}{\ensuremath{\mathsf{UL_{sfol}}(\Sigma)}}
\newcommand{\ULpm}{\ensuremath{\mathsf{UL_{sfol}}(\Sigpm)}}
\newcommand{\EqLsig}{\ensuremath{\mathsf{Eq(L_{sfol}}(\Sigma))}}
\newcommand{\EqLpm}{\ensuremath{\mathsf{Eq(L_{sfol}}(\Sigpm))}}
\newcommand{\psitr}{\ensuremath{\psi_{\mathsf{true}}}}
\newcommand{\psifa}{\ensuremath{\psi_{\mathsf{false}}}}
\newcommand{\FTCpmc}{\ensuremath{\FTCpm^{\mathsf{cl}}}}
\begin{document}

\thispagestyle{empty}

\title{Fracterm Calculus for Partial Meadows}

\author{
\small Jan A. Bergstra \& Alban Ponse\\
\small Informatics Institute, University of Amsterdam,\\
\small Science Park 900, 1098 XH, Amsterdam, The Netherlands\\
\small j.a.bergstra@uva.nl 
 \& a.ponse@uva.nl
}

\date{}

\maketitle

\begin{abstract}
Partial algebras and partial data types are discussed with the use of signatures that allow partial functions, and
a three-valued short-circuit (sequential) first order logic with a Tarski semantics.
The propositional part of this logic is also known as McCarthy calculus and has been studied extensively. 

 Axioms for the fracterm calculus of partial meadows are given. The case is made that in this way a 
rather natural formalisation of fields with division operator is obtained. 
It is noticed that the logic thus obtained cannot express that 
division by zero must be undefined.

 An interpretation of the three-valued sequential logic into $\bot$-enlargements of partial algebras is
given, for which it is concluded that the consequence relation of the former logic is semi-computable,
and that the $\bot$-enlargement of a partial meadow is a common meadow.
\\[2mm]
\textbf{Keywords:}
fracterm calculus
- partial meadow
- common meadow
- abstract data type
\end{abstract}

{\small\fontsize{8.4}{4}\selectfont \tableofcontents}

\section{Introduction}
\label{sec:1}
Following~\cite{BergstraT2007}, we use \emph{meadow} for fields equipped with an inverse or a division function. 
A \emph{partial} meadow is a field equipped with a partial division function, here written $\frac{x}{y}$ 
in fractional notation. A partial meadow is a partial algebra because division is 
partial as $\frac x 0$ is undefined. A \emph{fracterm} is an expression of the form 
\(
 \frac{p}{q},
\)
where $p$ and $q$ are terms over the signature under consideration.

Fracterm calculus for partial meadows (FTCpm)  fully accepts $\frac{1}{0}$ as a fracterm,  in spite of 
it having no value. At the same time a three valued logic is adopted. 
For instance in FTCpm it is not the case that $\frac{1}{0} = \frac{1}{0}$ and
neither is it the case that $ \frac{1}{0} \neq \frac{1}{0}$. 
Both assertions are considered syntactically valid, however. We will use 
\emph{partial equality}, written
$\eqp$ for the equality sign for partial data types. 
If either $t$ or $r$ has no value then $t\eqp r$
has no classical truth value, i.e. \true\ or \false. 
The assertion $\phi(x)$ with
\[
\phi(x) \equiv x \nep  0 \to \frac{x}{x} \eqp 1
\]
is taken for a fundamental fact about partial meadows which holds for all $x$. 
Contemplating the substitution $x=0$ suggests that the most plausible reading of the implication 
in $\phi(x)$ is a short-circuit implication (also referred to as sequential implication, or as the 
implication of McCarthy's logic), a state of affairs  which we wish to make explicit in the symbolic 
notation. Following~\cite{BergstraBR1995} we will denote short-circuit implication with the connective 
$\leftimp$, thereby changing $\phi$ to $\phi'$:
\[
\phi^\prime (x) \equiv x \nep  0 \leftimp \frac{x}{x} \eqp 1.
\]
We notice  that evaluation of $\phi^\prime(0)$ will not involve an attempt to evaluate the fracterm 
$\frac{0}{0}$, because $0 \nep  0$ evaluates to \false, thereby rendering the implication 
at hand \true\ without further inspection of its consequent. 
Moreover, once division is modelled as a partial function, assigning a Boolean truth value to 
$\frac{0}{0} \eqp \frac{0}{0}$ becomes artificial and the use of a non-classical logic with three or 
more truth values becomes plausible if not unavoidable. One may object that $\frac{0}{0} \eqp \frac{0}{0}$ 
is an assertion that ``asks'' for division by $0$ which for that reason must be excluded. 
The reasons for understanding logical connectives in a short-circuited manner (i.e.\ McCarthy logic) 
are these: the most plausible alternatives are versions of strong Kleene logic which allow 
$\tr \vee x = x \vee \tr = \tr$ for all truth values $x$ including non-Boolean ones
and \tr\ a constant for truth. 
The use of such logics is plausible in computer science, for instance in~\cite{JonesM1994} the equality 
relation $\eqp$ (in~\cite{JonesM1994} simply denoted with $=$) is combined with a strong 
Kleene logic. 
The preference of~\cite{JonesM1994} for that logic is based on the 
symmetry of conjunction and disjunction, symmetries which are lost in the short-circuit case. 
By consequence one finds that $\psi(x) \equiv \frac{x}{0} = 2 \vee x = 3$ 
would be true for $x=3$. Evaluating $\psi(3)$, however, seems to call for an 
attempt to evaluate $ \frac{3}{0}$ which is impossible for elementary arithmetic. 
For these reasons we deviate from~\cite{JonesM1994} by adopting short-circuit logic
as preferable in the case of rational numbers with a partial division function. 
Some general background on short-circuit logic is given in Section~\ref{sec:6}.

\medskip

We are unaware of any existing work which investigates the details of the above 
view of partial meadows, or stated differently of formalising arithmetics involving a 
partial division function.

\subsection{Survey of the paper}
We consider the following results of this paper to be new:
\begin{enumerate}[(i)]
\setlength\itemsep{-.4mm}
\item
The claim that short-circuit logic is the most natural logic for use in the context of  partial meadows, 
plus a listing of reasons, together constituting a rationale, for that claim 
(in Secion~1.2 below).
\item
A complete axiomatisation of the fracterm calculus of partial meadows. Here we notice that two variants of completeness must be distinguished: 
\\[-6mm]
\begin{itemize}
\setlength\itemsep{-.4mm}
\item
Axiomatisation completeness: complete Axiomatisation of a class of (partial) algebras using a logic with given semantics.
\item
Logical completeness: completeness of a proof system for a logic with given semantics.
\\[-6mm]
\end{itemize}
Axiomatisation completeness is the focus of our paper.
Axiomatisation completeness (with fracterm calculus for partial meadows as the intended application) can be 
understood and appreciated without any regard to logical completeness. 
We intend this paper to be fully self-contained, both mathematically and in terms of motivation, and to be 
independent from any proof theoretic considerations.
\item
A result on fracterm flattening in the context of the fracterm calculus of partial meadows.
\item
An interpretation of the short-circuit logic for partial algebras into the first order logic of 
algebras enlarged with an element $\bot$ which serves as an absorptive element. The interpretation 
demonstrates that the consequence relation for short-circuit logic on partial algebras is semi-computable. 
Moreover, said interpretation indirectly provides a proof system, or rather a formal proof method for short-circuit 
logic over partial algebras, thereby demonstrating that a proof theory can be developed in principle.
As a proof of concept, it is shown that the $\bot$-enlargement of a partial meadow is a \emph{common} meadow, see
e.g.~\cite{BergstraP2021,BergstraP2016,BergstraT2022CJ} for common meadows.
\end{enumerate}

\subsection{On the rationale of fracterm calculus for partial meadows}
Fracterm calculus for partial meadows aims at formalising elementary arithmetic including division.
Modelling division as a partial function is motivated by the idea 
that this is the most conventional viewpoint on division. All views where division has been made total are, 
however useful these views may be in the context of certain specific objectives, somehow unconventional.

Even if formalisation of elementary arithmetic can be simplified by having division total (which may be 
achieved in different ways), then it is still the case that using a partial function for modelling 
division provides a very relevant alternative which merits systematic investigation. 
At this stage it is unclear whether or not, and if so to which extent,  totalisation of division is helpful 
for the formalisation and understanding of elementary arithmetic. 
Once the notion that division is modelled as a partial function, undefined when the 
denominator equals zero, has been adopted, it is plausible to require  (i.e.\ as a design 
constraint for texts) that texts are to be written in such a manner and ordering that a reader, 
when reading in the order of presentation, will never be asked or invited to contemplate 
what happens when dividing by 0.

\section{Partial algebras with 3-valued sequential logic}
\label{sec:2}

In Section~\ref{sec:2.1} we define a three-valued sequential first order logic for partial algebras, and
in Section~\ref{sec:2.2} we discuss its semantics. 

\subsection{Sequential first order logic for partial algebras}
\label{sec:2.1}

We start by recalling an equational logic that defines the sequential, short-circuited connectives
${\leftand}$ and ${\leftor}$,
called \emph{short-circuit} (or \emph{sequential}) disjunction and conjunction.
In~\cite{BP23} we introduced so-called $\CLSCL$, Conditional Short-Circuit Logic, together with
a relatively simple semantics based on evaluation trees for the sequential evaluation of atoms 
(propositional variables). 
In this paper, atoms will be partial equalities.

$\CLSCL$ is equivalent with three-valued Conditional Logic, as introduced by Guzmán and Squier in~\cite{GS90},
but is distinguished by the use of the specific notation for the short-circuit connectives mentioned above. 
In~\cite{BP23} we provided several complete, independent
equational axiomatisations of $\CLSCL$, based on whether or not to include the constants 
\tr, \fa\ and \und\ for the three truth 
values \true, \false\ and \undefi, respectively. 

\begin{table}[t]
\caption{The set \CLe\ of axioms of $\CLSCL$ with \tr, \fa\ and the connective $\leftimp$}
\label{tab:CL2}
\vspace{1.6mm}
\centering
\hrule
\begin{align*}
\label{Neg}
\tag{e1}
&
&\fa
&=\neg\tr
\\
\label{Or}
\tag{e2}
&&\phi\leftor \psi
&=\neg(\neg \phi\leftand\neg \psi)
\\
\label{Tand}
\tag{e3}
&&\tr\leftand \phi
&=\phi
\\
\label{Abs}
\tag{e4}
&&\phi\leftand(\phi\leftor \psi)
&=\phi
\\
\label{Mem}
\tag{e5}
&&(\phi\leftor \psi)\leftand \xi
&=(\neg \phi\leftand(\psi\leftand \xi))\leftor(\phi\leftand \xi)
\\
\label{Com}
\tag{e6}
&&(\phi\leftand \psi)\leftor(\psi\leftand \phi)
&=(\psi\leftand \phi)\leftor(\phi\leftand \psi)
\\
\tag{e7}
\label{Imp}
&&\phi\leftimp \psi
&=\neg \phi\leftor \psi
\end{align*}
\hrule
\end{table}

In Table~\ref{tab:CL2}, we define the set \CLe\ of axioms of $\CLSCL$ with constants \tr\ and \fa\ and
and with the addition of the connective $\leftimp$ that defines short-circuit implication.
The axioms of \CLe\ imply double negation elimination, i.e.
\begin{equation}
\label{DNS}
\tag{DNE}
\neg\neg \phi=\phi,
\end{equation}
and therewith a sequential version of
the duality principle: in equations without occurrences of $\leftimp$, the connectives
$\leftand$ and $\leftor$ and occurrences of \tr\ and \fa\ can be swapped. 
Axiom~\eqref{Abs} is a sequential version of the absorption law. Axioms~\eqref{Mem} and ~\eqref{Com}
imply some other properties of these connectives and~\eqref{Imp} defines ${\leftimp}$.
Next to~\eqref{DNS}, other useful consequences of \CLe\ are the following:
\begin{enumerate}[(a)]
 \setlength\itemsep{-.4mm}
\item 
\label{(I)}
$\phi\leftand \tr=\phi$ and $\tr\leftor \phi=\tr$,
\item 
\label{(II)}
$(\phi\leftand \psi)\leftand \xi=\phi\leftand(\psi\leftand \xi)$, so the connective $\leftand$ is associative, 
\item 
\label{(III)}
$\phi\leftand(\psi\leftand \phi)=\phi\leftand \psi$, so, with $\psi=\tr$, $\leftand$ is idempotent, 
i.e. $\phi\leftand \phi=\phi$,  
\item
\label{(IV)}
$\phi\leftand (\psi\leftor \xi)=(\phi\leftand \psi)\leftor (\phi\leftand \xi)$, so  
$\leftand$ is left-distributive, 
\item
\label{(V)}
$\phi\leftand \neg \phi=\neg \phi\leftand \phi$, 
\item
\label{(VI)}
$(\phi\leftand \psi)\leftimp \xi=\phi\leftimp(\psi\leftimp \xi)$.
\end{enumerate}
Of course, the duals of \eqref{(I)}--\eqref{(V)}, say \eqref{(I)}$'$--\eqref{(V)}$'$, are also consequences of \CLe.
Consequences~\eqref{DNS} and \eqref{(I)}--\eqref{(VI)} follow quickly with help of
the theorem prover \emph{Prover9}~\cite{Prover9}, see Appendix~\ref{A.0}. 
Simple short proofs of~\eqref{DNS} and~\eqref{(I)}--\eqref{(V)} from (only)  \eqref{Neg}--\eqref{Mem}
are included in~\cite[App.A.3 \& A.4]{BPS21}.
At the end of Section~\ref{sec:2.2} (and also in Appendix~\ref{A.0}) we discuss an example that refutes 
the commutativity of ${\leftor}$ and give a completeness result for \CLe.

Based on \CLe, we define a sequential first order logic for partial
algebras, and in Section~\ref{sec:3} we wil refine this logic to the specific case of partial meadows. 
Let $\Sigma$ be a signature that contains one or more partial functions
and a constant symbol $c$.
Consider a partial $\Sigma$-algebra $A$ with non-empty domain $|A|$. 
There is an equality relation called \emph{partial equality} and written $\eqp$ which behaves on 
all domains as usual, i.e. as a normal 
equality relation. This means that for $a,b \in |A|$, $a \eqp b$ if, and only if, $a = b$, where 
$=$ is the normal equality relation on $|A|$. 
For clarification, we will illustrate further axioms and rules of inference with informal use of 
the semantics defined in Section~\ref{sec:2.2}.

With \Lsfolsig\ we denote the collection of formulae inductively defined as follows,
assuming a set $V_{var}$ of variables:
\begin{itemize}
 \setlength\itemsep{-.0mm}
\item 
$\tr,\fa \in \Lsfolsig$,
\item 
for $\Sigma$-terms $t$ and $r$ the \emph{atomic} formulae
$(t \eqp r) \in \Lsfolsig$,
\item
if $\phi \in \Lsfolsig$ then $ \neg \phi \in  \Lsfolsig$,
\item
if $\phi,\psi \in \Lsfolsig$, then  
$\phi \leftor \psi$, 
$\phi \leftand \psi$, and 
$\phi \leftimp \psi$ are in $\Lsfolsig$,
\item
if $\phi \in \Lsfolsig$ and $x \in V_{var}$, then  
$\existsp x. \phi$ and $\forallp x. \phi$ are in \Lsfolsig.
\end{itemize}
So, in \Lsfolsig, there is explicit quantification over variables that occur in atomic formulae.
For example, we will see that a partial meadow satisfies 
\[(\forallp x.x\nep 0 \leftimp \frac xx\eqp 1)=\tr.\]
We extend \CLe\ by adding the following axiom: 
\begin{equation}
\label{e8}
\tag{e8}
\existsp x.\phi = \neg\forallp x.\neg\phi.
\end{equation}

Next, we define \ULsig, the universal quantifier-free fragment of \Lsfolsig, and \EqLsig,
the collection of equations over \Lsfolsig\ and the largest language we consider:
\begin{itemize}
 \setlength\itemsep{-.mm}
\item If $\phi\in \Lsfolsig$ is quantifier-free, then $\phi\in\ULsig$,
\item If $\phi,\psi \in \Lsfolsig$ then $(\phi=\psi)\in\EqLsig$.
\end{itemize}
Hence, in both \ULsig\ and \EqLsig, variables 
can occur free.
We give an example of a valid \EqLsig-formula: 
a partial meadow $A$ satisfies both $(x\nep 0\leftimp\frac xx\eqp 1)=\tr$ and
$(\forallp x.x\nep 0\leftimp\frac xx\eqp 1)=\tr$, 
which is characterised by the equivalence
\[
A\models(x\nep 0\leftimp\tfrac xx\eqp 1)=\tr \iff A\models(\forallp x.\: x\nep 0\leftimp\tfrac xx\eqp 1)=\tr.  
\]
More generally, for a partial $\Sigma$-algebra $A$,
the following equivalences hold true for $\phi\in\ULsig$:
\begin{align}
\label{qF}
\tag{qF}
&A\models\phi = \tr \iff 
A\models(\forallp x.\phi) = \tr\iff A\models(\neg\existsp x.\neg\phi)=\tr,
\end{align}
where the second equivalence is a consequence of axiom~\eqref{e8}. 
The related equivalence $A\models(\existsp x.\phi) = \tr\iff A\models(\neg\forallp x.\neg\phi)=\tr$ 
is more complex, see Section~\ref{sec:2.2}.

\smallskip

We are especially interested in $\EqLsig$-formulae of the form
\[\phi=\tr,\]
and give below axioms and rules for deriving such equations, accompanied by  
examples for partial meadows (which will be formally defined in Section~\ref{sec:3}).
When proving results, and in some cases when writing axiom systems or designing operations on the syntax, 
we assume for simplification that all formulas can be written with a subset of the sequential connectives, 
for example, with connectives $\neg$ and $\leftor$ and with the universal quantifier $\forallp$ only
(by axiom~\eqref{e8}). 
Alternatively one may assume that connectives ${\leftand}$ and ${\leftimp}$ are used as well as both quantifiers
$\existsp$ and $\forallp$, 
on top of equations and denial inequations, 
that is negated equations written in the form $t \nep  r$.
\\[1.97mm]
\textbf{Axioms for the relation $\eqp$.} 
This relation is not a congruence relation because it is not reflexive on expressions, 
for example, in a partial meadow it is not the case that $\frac 10 \eqp \frac 10$ and neither is 
it the case that $\frac 10\nep\frac 10$.
Equations of the form $t\eqp t$ are used to express that $t$ is defined
and occur in the `weak substitution
property' (p6) and (p7) (cf.\ \cite{BergstraBTW81}).
We write  $\leftand_{i=1}^1\phi_i = \phi_1$ and
for $k>0$, $\leftand_{i=1}^{k+1} \phi_i=\phi_1\leftand(\leftand_{i=2}^{k+1}\phi_i)$. 

Axioms (and axiom schemes) for the relation $\eqp$ are the following, for any constant $c\in\Sigma$
and $x\in V_{var}$ and 
(open) $\Sigma$-terms $t_i$:
\\[1.58mm]
\begin{tabular}[t]{ll}
(p1)
&$(c\eqp c)=\tr$, 
\hfill(reflexivity for constants)
\\[1.58mm]
(p2)
&$(x\eqp x)=\tr$,
\hfill(reflexivity for variables)
\\[1.58mm]
(p3)
&$((\leftand_{i=1}^2 t_i\eqp t_i)\!\leftimp\! (t_1\eqp t_2~\leftimp~ t_2\eqp t_1))=\tr$,
\hfill(symmetry) 
\\[1.58mm]
(p4)
&$((\leftand_{i=1}^3 t_i\eqp t_i)\!\leftimp\!((t_1\eqp t_2\leftand t_2\eqp t_3)\leftimp t_1\eqp t_3))=\tr$.
\quad\hspace{10mm}
(transitivity)
\end{tabular}
\\[1.58mm]
For each $k$-ary $f\in\Sigma$ that is total
and all (open) $\Sigma$-terms $t_1,...,t_k$:
\\[1.58mm]
$\begin{tabular}[t]{ll}
\textup{(p5)}
&$((\leftand_{i=1}^k t_i\eqp t_i)
\leftimp
f(t_1,...,t_k)\eqp f(t_1,...,t_k))=\tr$.
\quad\hspace{22.9mm}
(definedness)
\end{tabular}
$
\\[1.58mm]
\quad\hspace{22.9mm}
For all $k$-ary $f\in\Sigma$
and all (open) $\Sigma$-terms $t_1,...,t_k, ~r_1,...,r_k$, the  \emph{weak substitution property}:
\\[1.58mm]
$\begin{array}[t]{rlrr}
\textup{(p6)}
&(f(t_1,...,t_k)\eqp f(t_1,...,t_k)\leftimp
 (\leftand_{i=1}^k t_i\eqp t_i))&=\tr,
\\[2.42mm]
\textup{(p7)}
&(((f(t_1,...,t_k)\eqp f(t_1,...,t_k)\leftand (\leftand_{i=1}^k t_i\eqp r_i)))
&\leftimp ~
\\
&&\multicolumn{2}{r}{
\!\hspace{-36pt}f(t_1,..., t_k) \eqp f(r_1,..., r_k))=\tr.}
\end{array}
$
\\[1.58mm]
For example, if in a partial meadow  $\tfrac tt$ is defined then so is $t$ by (p6),
and if $\tfrac tt$ is defined and $t\eqp r$, then $\tfrac tt\eqp \tfrac rr$ by (p7).
\\[1.58mm]
\noindent
\textbf{More axioms for $\EqLsig$.}
For $t$ a $\Sigma$-term, $x\in V_{var}$, $\phi\in\Lsfolsig$, and constant $c\in\Sigma$:
\\[1.58mm]
\begin{tabular}[t]{ll} 
(a1)
&$((t\eqp t\leftand\forallp  x.\phi) ~\leftimp~ \phi[t/x])=\tr$ 
~if no variables in $t$ occur bounded in $\phi$, 
\\[1.58mm]
(a2)
&$(\forallp x.\:x\eqp c \leftor x\nep c)=\tr$.
\end{tabular}
\\[1.58mm]
For example, with respect to a partial meadow, axiom (a1)
excludes the substitution $x\mapsto \frac y0$ (because that 
would introduce undefinedness). A more extensive example using (a2) is the derivation
of~\eqref{Ex:1} on page~\pageref{Ex:1}.
Note that by consequence \eqref{(VI)} of \CLe, (a1) can
also be written as 
\[(t\eqp t\leftimp(\forallp x.\phi \leftimp \phi[t/x]))=\tr.\]

Below we define two replacement rules (i1) and (i2) with 
help of the notion of a `context': if a formula
$\phi$ in $\Lsfolsig$ has a subformula $\psi$, then $\phi$ is a \emph{context}
for $\psi$, notation
\(\phi\equiv C[\boldsymbol\psi],\)
where $\psi$ can be written in boldface to indicate a single occurrence. 
For example, 
if 
$\phi\equiv(\psi\leftand\xi)\leftor(\neg\psi)$, then
\[
\phi\equiv
C_1[\boldsymbol\psi]\equiv(\boldsymbol\psi\leftand\xi)\leftor(\neg\psi) 
~~\text{and}~~
\phi\equiv
C_2[\boldsymbol\psi]\equiv(\psi\leftand\xi)\leftor(\neg\boldsymbol\psi).
\]
\textbf{Rules of inference for $\EqLsig$}, where we write 
$\vdash \phi=\psi$ for the derivability
of $\phi=\psi$ from the above axioms, in particular including those of \CLe\ (Table~\ref{tab:CL2}) 
extended with axiom~\eqref{e8}:
\\[1.78mm]
\begin{tabular}[t]{rl}
(i1)& 
~If $\vdash C[\boldsymbol{t\eqp r}]=\tr$ and $\vdash (r\eqp s)=\tr$   
then $\vdash C[\boldsymbol{t\eqp s}]=\tr$,
\hspace{1.8mm}({replacement} rule 1)
\\[1.78mm]
(i2)& 
~If $\vdash C[\boldsymbol{\phi}]=\tr$ and  $\vdash\phi = \psi$ 
then $\vdash C[\boldsymbol{\psi}]=\tr$,
~\hfill({replacement} rule 2)
\\[1.78mm]
(i3)&
~If $\vdash \phi=\tr$ and  $\vdash(\phi\leftimp\psi) = \tr$ then $\vdash \psi=\tr$,
\hfill({modus ponens} for $\leftimp$)
\\[1.78mm]
(i4)&
~If $\vdash \phi =\tr$ and $\vdash \psi=\tr$, then $\vdash (\phi\leftand \psi)=\tr$,
\hfill
($\leftand$-introduction)
\\[2.12mm]
(i5)&
~If $\vdash(\phi\leftimp\psi)=\tr$ and $\vdash(\psi\leftor\neg\psi)=\tr$ and $\vdash(\xi\leftor\neg\xi)=\tr$,
\\[1.08mm]
&~then $\vdash((\phi\leftor\xi)\leftimp(\psi\leftor \xi))=\tr$.
\hfill($\leftor$-introduction)
\end{tabular}
\\[1.58mm]
Inference rule (i1) is a replacement rule that captures some properties of $\eqp$.
For example, a partial meadow has the properties that $\vdash (1\eqp 1+0)=\tr$ and $\vdash(0\nep 1)=\tr$.
So, with $C[{0\eqp 1}]=\neg({0\eqp 1})$  it follows by 
(i1) that $\vdash(0\nep 1+0)=\tr$. 

As for (i2), we first note that this rule
expresses the common congruence rule for \CLe-expressions.
Furthermore, $\phi$ is a subformula of itself ($\phi\equiv C[\phi]$), 
so if $\vdash \phi=\psi$ and $\vdash\phi=\tr$, then $\psi\equiv C[\psi]$ and $\psi=\tr$ is derived by (i2). 
Below, we give an example that uses (i2).

For an example of (i3), instantiate $\phi$ with $\forallp x.\:x\eqp c \leftor x\nep c$, 
i.e.\ the left-hand side of (a2):
$\vdash (x\eqp x\leftimp((\forallp x.\:x\eqp c \leftor x\nep c)\leftimp (x\eqp c\leftor x\nep c)))=\tr$.
By (p1), $\vdash (x\eqp x)=\tr$, so by (i3), 
\[\vdash ((\forallp x.\:x\eqp c \leftor x\nep c)\leftimp 
(x\eqp c\leftor x\nep c))=\tr.\]
With (a2) and (i3), the leftmost consequence of \eqref{Ex:1} below 
is obtained, and hence, with consequence $\eqref{(V)}'$ 
(i.e. $\vdash \phi_1\leftor \neg\phi_1=\neg\phi_1\leftor\phi_)$, say $\vdash \phi=\psi$) and (i2) 
with $\phi\equiv C[\phi]$, also the second consequence of \eqref{Ex:1}, i.e. $C[\psi]=\tr$:
\begin{equation}
\label{Ex:1}
\tag{Ex.1}
\vdash (x\eqp c\leftor x\nep c)=\tr,\quad\text{and thus}\quad
\vdash (x\nep c\leftor x\eqp c)=\tr.
\end{equation}
As for rule (i5), we note that 
the conditions $\vdash(\psi\leftor\neg\psi)=\tr$ and 
$\vdash(\xi\leftor\neg\xi)=\tr$ exclude undefinedness of $\psi$ and $\xi$;
for an application of (i5), see the proof of Theorem~\ref{CFTF4PM}.
Inference rules (i2)--(i5) are derivable from $\CLe$, this follows easily with
\emph{Prover9}~\cite{Prover9}.

\begin{la}
\label{la:nuttig}
The relation $\{(t,r)\mid\text{$t,r~\Sigma$-terms such that $\vdash (t\eqp t)=\tr$}, \\
\vdash (r\eqp r)=\tr \text{~and~} \vdash(t\eqp r)=\tr\}$ is a congruence.
\end{la}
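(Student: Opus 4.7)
My plan is to verify that $R$ is an equivalence relation on the set $D = \{t \mid \vdash(t\eqp t)=\tr\}$ of provably defined terms, and that it is compatible with every function symbol of $\Sigma$, interpreting ``congruence'' in the standard partial-algebra sense.

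I would begin with the equivalence properties. Reflexivity on $D$ is immediate from the definition of $R$. For symmetry, given $(t,r)\in R$, I would combine $\vdash(t\eqp t)=\tr$ and $\vdash(r\eqp r)=\tr$ via $\leftand$-introduction (i4) into $\vdash(t\eqp t\leftand r\eqp r)=\tr$; instantiating (p3) with $t_1\equiv t,\ t_2\equiv r$ and applying modus ponens (i3) twice, using $\vdash(t\eqp r)=\tr$, then delivers $\vdash(r\eqp t)=\tr$. Transitivity proceeds analogously via (p4), with one additional application of (i4) to assemble the triple conjunction of definedness atoms and the pair conjunction of equalities before discharging the two nested implications by (i3).

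For compatibility with a $k$-ary $f\in\Sigma$, suppose $(t_i,r_i)\in R$ for each $i$. Iterated use of (i4) yields $\vdash(\leftand_{i=1}^{k} t_i\eqp r_i)=\tr$ and $\vdash(\leftand_{i=1}^{k} t_i\eqp t_i)=\tr$. When $f$ is total, (p5) together with the latter gives $\vdash(f(t_1,\dots,t_k)\eqp f(t_1,\dots,t_k))=\tr$, and symmetrically for $r_1,\dots,r_k$; hence both $f(t_1,\dots,t_k),f(r_1,\dots,r_k)\in D$. The weak substitution axiom (p7), combined with one further $\leftand$-introduction to form its antecedent and (i3) to discharge the implication, then yields $\vdash(f(t_1,\dots,t_k)\eqp f(r_1,\dots,r_k))=\tr$, placing the pair in $R$.

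The partial case is what I expect to be the main obstacle, though only at the level of how one phrases compatibility. The appropriate reading is the partial-algebra one: if $(t_i,r_i)\in R$ for all $i$ \emph{and} both $f(t_1,\dots,t_k),f(r_1,\dots,r_k)\in D$, then the pair $(f(t_1,\dots,t_k),f(r_1,\dots,r_k))$ lies in $R$. Under those hypotheses the (p7)-argument above applies verbatim, since (p7) requires only definedness of $f(t_1,\dots,t_k)$ together with the pointwise equalities. I would be explicit in the write-up that one should not expect to derive $f(r_1,\dots,r_k)\in D$ from $f(t_1,\dots,t_k)\in D$ and $t_i\eqp r_i$ using axioms (p1)--(p7) alone; this asymmetry is precisely why (p7) is formulated as propagation of equality rather than of definedness.
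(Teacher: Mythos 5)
Your proof is correct and follows essentially the same route as the paper's (one-line) argument: (p1)--(p4) with (i3)/(i4) for the equivalence properties, and (p5) combined with (p7) via (i4) and (i3) for compatibility with function symbols. The only difference is that you spell out the partial-function case explicitly (compatibility conditional on definedness of both composite terms, discharged by (p7) alone), which the paper's proof leaves implicit by mentioning only total $f$; this is a welcome clarification rather than a different approach.
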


\begin{proof}
The combination of (p5) and (p7) implies with (i4) that for each total $k$-ary function $f$ and 
for all defined $\Sigma$-terms $t_1,...,t_k$ and $r_1,...,r_k$
that satisfy $\vdash (t_i\eqp r_i)=\tr$, $(f(t_1,..., t_k) \eqp f(r_1,..., r_k))=\tr$.
So, with (p1)--(p4) we are done.
\end{proof}

\subsection{\smash{Inductively defined Tarski semantics of $\EqLsig$}}
\label{sec:2.2}

Assume $A$ is a partial $\Sigma$-algebra and $\sigma$ is a valuation that assigns values from 
$|A|$ to the variables in $V_{var}$. 
We write $\sigma[a/x]$ with $a\in|A|$ for the valuation that assigns $a$ to $x$ and is otherwise defined as
$\sigma$.
A valuation $\sigma$ extends to $\Sigma$-terms provided their constituents are defined: 
if 
$f(\sigma(t_1),...,\sigma(t_k))$ is defined, then 
$\sigma(f(t_1,...,t_k))=f(\sigma(t_1),...,\sigma(t_k))$. 

\medskip

We first consider formulae of the form $\phi=\tr$ (and their universal quantifications)
and define $A,\sigma \models \phi=\tr$ if, and only if, 
\(
A,\sigma \models \phi.
\)
A key role is played by the negation of partial equality $A,\sigma \models t \nep r$, also called denial 
inequality.
For denial inequality we will have three notations: 
\[\text{$A,\sigma \models \neg (t \eqp r)$, $A,\sigma \models t \nep r$, and 
$A,\sigma \notmodels t \eqp r$.}
\]
We notice that when working with total algebras, $\eqp$ coincides with $=$ and 
denial inequality coincides with dissatisfaction: 
$A,\sigma \models t \nep r \iff 
A,\sigma \not \models t \eqp r$.

\medskip

For $\phi \in \Lsfolsig$, satisfaction $A,\sigma \models \phi$ complemented with  
denial satisfaction  $A,\sigma \notmodels \phi$  
is inductively defined as follows:
\\[-6mm]
\label{complemented}
\begin{enumerate}[ (1)]
 \setlength\itemsep{-.4mm}
\item 
$A,\sigma \models\tr$,

\item 
$A,\sigma \notmodels\fa$,

\item $A,\sigma \models t\eqp r $ if, and only if, both evaluation results $\sigma(t)$
and $\sigma(r)$ are defined (exist) and are equal: $\sigma(t) = \sigma(r)$,
 
\item $A,\sigma \notmodels t\eqp r $ if, and only if, both evaluation results $\sigma(t)$
and $\sigma(r)$ are defined (exist) and are different elements of $|A|$:
$\sigma(t) \ne \sigma(r)$,
  
\item   $A,\sigma \models \neg \phi$ if  $A,\sigma \notmodels \phi$,

\item   $A,\sigma \notmodels \neg \phi$ if  $A,\sigma \models \phi$, 

\item
\label{cd5}
$A,\sigma \models \phi_1 \leftor \phi_2$ if either  (i) $A,\sigma \models \phi_1$ or 
(ii) $A,\sigma \notmodels \phi_1$ and  $A,\sigma \models \phi_2$,
 
\item   $A,\sigma \notmodels \phi_1 \leftor \phi_2$ if   $A,\sigma  \notmodels  \phi_1$ and 
 $A,\sigma \notmodels \phi_2$,

\item
$A,\sigma\models\forallp  x. \phi$  if for all $a \in |A |$, 
it is the case that $A,\sigma[a/x] \models  \phi$,

\item
$A,\sigma\notmodels\forallp  x. \phi$  if for some $a \in |A |$, 
it is the case that $A,\sigma[a/x] \notmodels  \phi$
and for all $b \in |A |$, it is the case that $A,\sigma[b/x] \models  \phi\leftor\neg\phi$
~(i.e. $\phi$ is not undefined).

\end{enumerate}
It follows that $A,\sigma\models\phi\leftimp\psi$ if either (i) $A,\sigma\notmodels\phi$
or (ii) $A,\sigma \models \phi$ and  $A,\sigma \models \psi$.
By duality and inference rule~(i2), it follows that
$A,\sigma\models\phi\leftand\psi$ if $A,\sigma\models\phi$ and $A,\sigma\models\psi$.
This does not contradict the fact that ${\leftand}$ and ${\leftor}$ are \emph{not} commutative,
as is demonstrated at the end of this section. 
Furthermore, it follows that
\begin{align}
\tag{qE}
\label{eq:forall}
\text{
\begin{tabular}[b]{l}  
$A,\sigma \models \existsp  x. \phi$ if for some $a \in |A |$, it is the case that 
$A,\sigma[a/x] \models \phi$ 
\\
\quad
and for all $b \in |A |$, it is the case that
$A,\sigma[b/x] \models \phi\leftor\neg\phi$.
\end{tabular}}
\\[2mm]
\nonumber
\text{\begin{tabular}[t]{l}
$A,\sigma \notmodels \existsp  x. \phi$ if for all $a \in |A |$, it is the case that 
$A,\sigma[a/x] \notmodels  \phi$.
\end{tabular}}
\end{align}
Note that in a quantification $\forallp x. \phi$ or 
$\existsp x. \phi$ the free variable $x$ ranges over $|A|$ and not over $\Sigma$-terms. So a
term $t$ may only be substituted for the variable $x$ when the definedness of $t$ is
valid.

\medskip

As usual, with 
\[A   \models \phi\]
it is denoted that for all valuations $\sigma$ it is the case that 
$A  ,\sigma \models \phi$. With $A \notmodels \phi$ it is denoted that for all valuations $\sigma$, 
$A,\sigma \notmodels \phi$.

\begin{prn}
\label{Prop:soundness}
The axioms and inference rules for 
$\EqLsig$ given in Section~\ref{sec:2.1} are valid 
in all partial $\Sigma$-algebras.
\end{prn}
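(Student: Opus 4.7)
The plan is to induct on the length of a derivation, reducing the proposition to a separate check for each axiom and each inference rule under the semantics of Section~\ref{sec:2.2}. A convenient preliminary fact, provable by a short induction on $\phi$, is that $A,\sigma\models\phi$ and $A,\sigma\notmodels\phi$ are mutually exclusive; this licenses treating every $\phi$ at $(A,\sigma)$ as inhabiting exactly one of three states---\emph{satisfied}, \emph{denied}, or \emph{undefined}---and treating an equation $\phi=\psi$ as the claim that $\phi$ and $\psi$ always share the same state. A second standing lemma I would record is a \emph{contextual congruence principle}: if $\phi$ and $\psi$ share the same state at every $(A,\sigma)$, then so do $C[\phi]$ and $C[\psi]$ for every $\Lsfolsig$-context $C[\cdot]$. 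This is a routine induction on $C[\cdot]$ from the semantic clauses for $\neg$, $\leftor$, and the quantifiers.

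For the \CLe-axioms (e1)--(e7), each is a three-way case analysis on the state of the component formulas, read off directly from the clauses for $\neg$ and $\leftor$ (and the derived clauses for $\leftand$ and $\leftimp$). Axiom (e8) matches the clauses for $\forallp$ with the formulation (qE) for $\existsp$ after pushing $\neg$ through via the negation clauses. For the equality axioms (p1)--(p7), clauses~(3) and~(4) do the work: (p1) and (p2) hold because constants and variables always evaluate into $|A|$; (p3)--(p5) use the short-circuit antecedent $\leftand_{i=1}^{k}t_i\eqp t_i$ to guarantee that all evaluations exist, so that the consequent is handled by classical equational reasoning; (p6) and (p7) are direct from the definition of a partial algebra (the domain of definition of $f$ is closed under projections, and $f$ acts as a function on that domain). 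Axiom (a1) is an application of a standard substitution lemma: the conjunct $t\eqp t$ supplies $a=\sigma(t)\in|A|$, and then $A,\sigma\models\phi[t/x]$ coincides with $A,\sigma[a/x]\models\phi$, with the proviso on free variables of $t$ ensuring the substitution is legal. Axiom (a2) is immediate, since for every $\sigma$ either $\sigma(x)=\sigma(c)$ or not.

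Inference rules (i1) and (i2) are applications of the contextual congruence principle: for (i1), $A\models r\eqp s$ forces $\sigma(r)=\sigma(s)$ for every $\sigma$, so $t\eqp r$ and $t\eqp s$ share their state at every $\sigma$; for (i2) the premise $\vdash\phi=\psi$ directly supplies the required state-agreement. Rules (i3) and (i4) follow immediately from the clauses for $\leftimp$ and $\leftand$. The main obstacle I expect is (i5). The key bookkeeping observation is that $A\models\phi\leftimp\psi$ already forces $\phi$ to be defined everywhere on $A$---at an undefined $\sigma$ neither disjunct in the clause for $\leftimp$ applies---while the side conditions $\vdash(\psi\leftor\neg\psi)=\tr$ and $\vdash(\xi\leftor\neg\xi)=\tr$ force $\psi$ and $\xi$ to be defined everywhere as well. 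On such a model the short-circuit connectives collapse to their classical analogues, and the validity of $(\phi\leftor\xi)\leftimp(\psi\leftor\xi)$ from $\phi\leftimp\psi$ reduces to the classical propositional tautology $(p\to q)\to\bigl((p\vee r)\to(q\vee r)\bigr)$. The delicate point, which explains the presence of the side conditions in the first place, is that without them the conclusion can fail by becoming \emph{undefined} at some $\sigma$---for example, when $\notmodels\phi$, $\models\xi$, and $\psi$ is undefined, the consequent $\psi\leftor\xi$ is undefined while the antecedent $\phi\leftor\xi$ is satisfied.
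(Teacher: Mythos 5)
Your proposal is correct and follows the same overall strategy as the paper---a direct, case-by-case semantic verification of each axiom and inference rule---but it is more complete in two respects, so the comparison is worth recording. First, the paper's argument is explicitly a sketch: it checks the axioms only in their \true-cases and defers the cases involving undefinedness to a separate comment, where they are discharged by appeal to the completeness result for \CLe\ extended with $\neg\und=\und$. Your trichotomy lemma (at each $(A,\sigma)$ every formula is in exactly one of the states satisfied, denied, undefined), combined with reading $\phi=\psi$ as state-agreement and a three-way case split, handles all cases uniformly, which is what a full proof requires; just note that you should record pairwise exclusivity of all three states (not only of $\models$ and $\notmodels$), and that the paper asserts exhaustiveness only for quantifier-free formulae, so your induction must also pass through the quantifier clauses---which it does, since clause (10) for $\notmodels\forallp x.\phi$ builds in definedness of $\phi$ at every instance. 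Second, for the inference rules the paper declares (i1) obviously sound and justifies (i2)--(i5) by their propositional derivability from \CLe\ (checked with Prover9), whereas you verify them semantically: your contextual-congruence lemma is exactly the right tool for (i1) and (i2), and your treatment of (i5)---the premises force $\phi$, $\psi$ and $\xi$ to be defined everywhere, after which the short-circuit connectives behave classically---together with the counterexample ($\notmodels\phi$, $\models\xi$, $\psi$ undefined) explaining why the side conditions are indispensable, is correct and more informative than the paper's appeal to the theorem prover.
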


\begin{proof}[Proof (sketch)]
Here we only consider the validity of axioms~\eqref{Neg}--\eqref{Imp} (Table~\ref{tab:CL2}) and~\eqref{e8}
restricted to their \true-cases, and we will comment on this point at the end of this section. Furthermore, only  
the less trivial cases are considered.
Let a partial $\Sigma$-algebra $A$ and valuation $\sigma$ be given.
\\[1.28mm]
\underline{Axiom~\eqref{Mem}}, i.e., $(\phi\leftor \psi)\leftand \xi
=(\neg \phi\leftand (\psi\leftand \xi))\leftor (\phi\leftand \xi)$, say $L=R$. Assume that
 $A,\sigma\models L$. This is the case if $A,\sigma\models \phi\leftor \psi$ and $A,\sigma\models \xi$,
so, by \eqref{cd5}-(i), $A,\sigma\models \phi$ and $A,\sigma\models \xi$ (subcase 1), or, by \eqref{cd5}-(ii), 
$A,\sigma\notmodels \phi$ and $A,\sigma\models\psi$ and $A,\sigma\models\xi$ (subcase 2).
Then $A,\sigma\models R$: in subcase~1, $A,\sigma\notmodels\neg \phi\leftand (\psi\leftand \xi)$ 
because $A,\sigma\models \phi$, and since $A,\sigma\models \xi$ it follows that 
$A,\sigma\models \phi\leftand\xi$.
In subcase 2, it follows immediately that $A,\sigma\models\neg \phi\leftand (\psi\leftand \xi)$.
If $A,\sigma\models R$ then $A,\sigma\models L$ follows in a similar way.
\\[1.28mm]
\underline{Axiom~\eqref{Com}}, 
i.e. $(\phi\leftand \psi)\leftor (\psi\leftand\phi)
=(\psi\leftand\phi)\leftor (\phi\leftand \psi)$, say $L=R$. Assume that
$A,\sigma\models L$. This is the case if $A,\sigma\models \phi\leftand \psi$, 
which is the case if
$A,\sigma\models \phi$ and $A,\sigma\models \psi$. (The other ${\leftor}$-case, i.e.  $A,\sigma \notmodels \phi\leftand \psi$ and 
$A,\sigma\models \psi\leftand \phi$, cannot occur).
Hence $A,\sigma\models R$.
If $A,\sigma\models R$ then, by symmetry, $A,\sigma\models L$.
\\[1.28mm]
\underline{Axiom~\eqref{e8}},
i.e. $\existsp x.\phi = \neg\forallp x.\neg\phi$.
Assume that
$A,\sigma\models\neg\forallp x.\neg\phi$, then $A,\sigma\notmodels\forallp x.\neg\phi$, hence by \eqref{eq:forall},
for some $a \in |A |$, it is the case that
$A,\sigma[a/x] \notmodels \neg\phi$ and thus $A,\sigma[a/x] \models \phi$, and for all 
$b \in |A |$, it is the case that $A,\sigma[b/x] \models \neg\phi\leftor\phi$ and thus by consequence~\eqref{(VI)}, $A,\sigma[b/x] \models \phi\leftor\neg\phi$, which is the defining clause for $A,\sigma\models\existsp x.\phi$.
\\[1.28mm]
\underline{Axiom (a1)}, i.e. $((t\eqp t\leftand\forallp  x.\phi)\leftimp \phi[t/x])=\tr$, or equivalently,
$(t\eqp t\leftimp(\forallp  x.\phi ~\leftimp~ \phi[t/x]))
\\=\tr$, and assume no variables in $t$ occur bounded in $\phi$. If $A,\sigma\models t\eqp t$
and $A,\sigma\models \forallp  x.\phi$, then $t$ is defined, and if $\sigma(t)=b\in |A|$, 
then $A, \sigma[b/x]\models \phi$, which implies $A,\sigma\models \phi[t/x]$. 
\\[1.28mm]
\underline{Axiom (a2)}, i.e. $(\forallp x.\:x\eqp c \leftor x\nep c)=\tr$.
We have to show that for all $a\in|A|$, 
$A,\sigma[a/x]\models x\eqp c\leftor x\nep c$. If $a=\sigma(c)$, then $A,\sigma[a/x]\models\textup{(a2)}$, 
and otherwise, $A,\sigma[a/x]\models x\nep c$, hence $A,\sigma\models\textup{(a2)}$. 
\\[1.28mm]
\underline{The  rules of inference}. Rule (i1) is obviously sound, and inference rules (i2)--(i5) 
follow quickly with \emph{Prover9}~\cite{Prover9}, see Appendix~\ref{A.0}. 
\end{proof}

\medskip

A partial equality $t\eqp r$ is {undefined} if at least one of $t$ and $r$ is undefined.
In the following we generalise `undefinedness' to $\Lsfolsig$-formulae and 
$\EqLsig$-formulae using an 
adaptation of the notation for satisfiability. For a partial $\Sigma$-algebra $A$ and a valuation $\sigma$,
we write
\[
A,\sigma \unmodels \phi
\]
to express that $\phi$ is undefined in $A$ according to  $\sigma$, and we define this notion inductively:
\\[-6mm]
\begin{enumerate}[(1)]
 \setlength\itemsep{-.4mm}
\item 
$A,\sigma \unmodels t\eqp r $ if, and only if, at least one of both evaluation results $\sigma(t)$
and $\sigma(r)$ is undefined,

\item 
\label{(11)} 
$A,\sigma \unmodels \neg \phi$ if $A,\sigma \unmodels \phi$, 
 
\item 
$A,\sigma \unmodels \phi_1 \leftor \phi_2$  if either  (i)
$A,\sigma  \unmodels  \phi_1$ or (ii)
$A,\sigma \notmodels \phi_1$ and  $A,\sigma \unmodels \phi_2$,

\item 
$A,\sigma \unmodels \forallp  x. \phi$  if for some $a \in |A |$, it is the case that 
$A,\sigma[a/x] \unmodels \phi$.

\end{enumerate}
Some consequences that follow from clause~\eqref{(11)} and axiom~\eqref{e8}
(i.e. $\forallp x. \phi = \neg\existsp  x. \neg \phi$):
\begin{align}
\nonumber
&A,\sigma \unmodels \phi_1 \leftand \phi_2 
&&\text{  iff~ either  (i)
$A,\sigma  \unmodels  \phi_1$ or (ii) $A,\sigma \models \phi_1$ and $A,\sigma \unmodels \phi_2$},
\\
\label{c-1}
\tag{e8-1}
&A,\sigma \unmodels \forallp  x. \neg\phi 
&&\text{  iff~ $A,\sigma \unmodels \forallp  x. \phi$},
\\
\label{c-2}
\tag{e8-2}
&A,\sigma \unmodels \existsp  x. \phi 
&&\text{  iff~ $A,\sigma \unmodels \forallp  x. \phi$}.
\end{align}
It easily follows that for any formula $\phi\in\ULsig$  (thus $\phi$ quantifier-free)
and any $\sigma$,
either $A,\sigma\models \phi$, or $A,\sigma\notmodels \phi$, or $A,\sigma\unmodels \phi$.

We write $A\unmodels \phi$ if for all valuations $\sigma$, $A,\sigma\unmodels \phi$.
Some examples where we take $A$ to satisfy
$0,1 \in|A|$,  
$A\models 0\nep 1$, 
$A\unmodels \frac 10\eqp t$ for any term $t$
and $A\models \frac 11\eqp 1$
 (properties of a partial meadow):
\begin{enumerate}[(i)]
 \setlength\itemsep{.8mm}
\item 
\label{(i)}
$A \unmodels \frac 10\eqp 1$, so $A\unmodels\frac10\eqp 1\leftor 0\nep 1$, 
while $A \models 0\nep 1\leftor\frac10\eqp 1$, so ${\leftor}$ is not commutative.

\item
\label{(ii)}
$A\unmodels \forallp  x. \frac xx\eqp 1$ because $A\unmodels \frac 00\eqp 1$, 
and thus also $A \unmodels \existsp  x. \frac xx\eqp 1$.
\end{enumerate}
More generally, if $A,\sigma \unmodels \forallp  x. \phi$ then neither $A,\sigma \models \forallp  x. \phi$
nor $A,\sigma \models \forallp  x. \neg\phi$. This follows by choosing $a\in |A|$ that witnesses
$A,\sigma[a/x] \unmodels \phi$ (and hence also $A,\sigma[a/x] \unmodels \neg\phi$).

For $\EqLsig$-formulae and a partial $\Sigma$-algebra  $A$ we define 
$A,\sigma\models \phi=\psi$ by the following three clauses:
\\[-6mm]
\begin{enumerate}[(1)]
 \setlength\itemsep{-.4mm}
\item either $A,\sigma\models \phi$ and $A,\sigma\models \psi$,
\item or $A,\sigma\notmodels \phi$ and $A,\sigma\notmodels \psi$,
\item or $A,\sigma\unmodels \phi$ and $A,\sigma\unmodels \psi$.
\end{enumerate}
We write $A\models \phi=\psi$ if $A,\sigma\models \phi=\psi$ for all $\sigma$.
For example, if $A$ is a partial meadow (like in \eqref{(i)}--\eqref{(ii)} above), we find 
$A\models (\frac 10\eqp 1)=(\frac10\eqp 0)$. For another example, see Proposition~\ref{prop:X}.

The following result concerns the axiomatisation $\CLe$ in Table~\ref{tab:CL2} and is a minor 
generalisation of the result cited in~\cite{BP23} because short-circuit implication $\leftimp$ has 
been added as a definable connective. It immediately follows that this addition preserves this result.
\\[2mm]
\textbf{Theorem.} (See~\cite[Thm.5.2.(ii)]{BP23}.)
\emph{Conditional logic with \tr\ and $\:$\fa\ distinguished 
is completely axiomatised by the seven axioms ~\eqref{Neg}--\eqref{Imp} of \CLe\ in Table~\ref{tab:CL2}.
Moreover, these axioms are independent.}

\begin{proof}[\textbf{A comment on this theorem and the proof sketch of 
Proposition~\ref{Prop:soundness}}]
~In the proof sketch of Proposi\-tion~\ref{Prop:soundness} we did not take into account that 
a partial $\Sigma$-algebra $A$ allows the interpretation of formulas of the form $\phi=\psi$ in which both 
$\phi$ and $\psi$ can be undefined. 
For example, a partial meadow $A$ satisfies 
$A\models (x\leftand(\frac10\eqp\frac10))\leftor(\frac10\eqp\frac10)= (\frac10\eqp\frac10)$. 
By using a constant \und\ to represent the truth value \undefi\ and the corresponding axiom $\neg\und=\und$, 
it indeed follows that $\CLe\cup\{\neg\und=\und\}\vdash (x\leftand\und)\leftor\und=\und$ (see~\cite{BP23}). 
Finally, we note that axiom~\eqref{e8}, i.e. $\existsp x.\phi = \neg\forallp x.\neg\phi$, also covers all cases involving undefinedness. This follows from
the defining clauses that for any valuation $\sigma$,
$A,\sigma\unmodels \neg\phi$ if $A,\sigma\unmodels \phi$ and 
$A,\sigma \unmodels \forallp  x. \phi$  if for some $a \in |A |$, 
%it is the case that 
$A,\sigma[a/x] \unmodels \phi$, and from consequences\eqref{c-1} and~\eqref{c-2} above.
\end{proof}

\section{Fracterm calculus for partial meadows}
\label{sec:3}

In Section~\ref{sec:3.1} we define `fracterm calculus for partial meadows', FTCpm in short, and 
prove a completeness result for FTCpm.
In Section~\ref{sec:3.2}, we introduce a convention for concise notation for FTCpm for the sake of  
readability of axioms and proofs.
In Section~\ref{sec:3.3} we derive some properties of partial meadows and prove some results, among which 
``conditional flattening'', and pay attention to some $\EqLsig$-identies.

\subsection{FTCpm: a specification}
\label{sec:3.1}
The signature of partial meadows with divisive notation $\Sigma_{m}^{pd}$ is obtained by extending the signature of 
unital rings with a two place division operator (denoted $\frac{x}{y}$), where it is indicated in the signature 
description that division is a partial function. 
The sort of numbers involved is named $\mathsf{Number}$. 

\begin{dfn} 
A \textbf{partial meadow} is a structure $F^{pd} $ with signature $\Sigma_{m}^{pd}$ that is obtained 
by expanding a field $F$ with a partial division operator (with the usual definition, 
i.e. $\frac ab=c$ if $b\ne 0$ and
$b\cdot c=a$).
\end{dfn}

In $\smash{\Sigma_{m}^{pd}}$ constants are supposed to have a value and functions, except division, are supposed
to be total.

Starting from our definition in Section~\ref{sec:2.1} of a sequential first order logic,
we understand `fracterm calculus for partial meadows', FTCpm in short, as the collection of
$\ULpm$-formulae $\phi$ (thus $\phi$  quantifier-free), 
such that $\phi$ is  valid in all partial meadows, i.e.\ for each partial meadow $F^{pd}$,
\[F^{pd} \models \phi\] 
(that is, for all valuations $\sigma$, $F^{pd} ,\sigma \models \phi$). 

\begin{table}[t]
\caption{Specification of the signature $\Sigma^{pd}_m$ of fracterm calculus of partial meadows and 
\\
\phantom{Table~\ref{tab:FTCpm}\hspace{1.4mm}}
 a set $\FTCpm$ of axioms in the format of \EqLpm
} 
\label{tab:FTCpm}
\vspace{1.6mm}
\centering
\hrule
\begin{align}
\nonumber	
\nonumber	
\mathsf{signature}~&\colon {\Sigma_{wcr,\bot}}=\{  
\\
\nonumber
\texttt{sort}~&\colon \mathsf{Number}
\\
\nonumber
\mathsf{constants}~&\colon 0,1,\bot\colon  \mathsf{Number}
\\
\nonumber
\mathsf{total~functions}~&\colon \_+\_\,,\_\cdot\_\,\colon 
\mathsf{Number} \times  \mathsf{Number} \to \mathsf{Number};
\\
\nonumber
&~~-\_ \colon \mathsf{Number} \to  \mathsf{Number}
\\
\nonumber
\mathsf{equality~relation}~&\colon \_  =\_ \subseteq  \mathsf{Number} \times \mathsf{Number} \}
\\
\nonumber
\hspace{26mm}\mathsf{variables}~&\colon x,y,z \colon  \mathsf{Number}
\end{align}
\vspace{-8.4mm}
\begin{align}
\label{pm1a}
\tag{pm1a}
((x+y)+z \eqp x + (y + z))&=\tr
\hspace{22mm}
\\[1mm]
\label{pm2a}
\tag{pm2a}
(x+0 \eqp x)&=\tr
\\[1mm]
\label{pm3a}
\tag{pm3a}
(x + (-x) \eqp 0)&=\tr 
\\[1mm]
\label{pm4a}
\tag{pm4a}
(x \cdot (y \cdot z) \eqp (x \cdot y) \cdot z)&=\tr
\\[1mm]
\label{pm5a}
\tag{pm5a}
(x \cdot y \eqp y \cdot x)&=\tr
\\[1mm]
\label{pm6a}
\tag{pm6a}
(1 \cdot x \eqp x)&=\tr
\\[1mm]
\label{pm7a}
\tag{pm7a}
(x \cdot (y+z) \eqp (x \cdot y) + (x \cdot z))&=\tr 
\\[1mm]
\label{pm8a}
\tag{pm8a}
(y \nep  0  
\leftimp
\frac{x}{y} 
\eqp x  \cdot  \frac 1y)&=\tr
\\[1mm] 
\label{pm9a} 
\tag{pm9a}
(x \nep  0  
\leftimp
\frac{x }{x}  \eqp 1)&=\tr  
\\[1mm] 
\label{pm10a}
\tag{pm10a}
(0\nep 1)&=\tr
\\[2mm]
\label{pm11a}
\tag{pm11a}
((x\nep 0\leftand y\nep 0)
\leftimp x\cdot y\nep 0)&=\tr
\end{align}
\hrule
\end{table}

A specification of $\Sigma_{m}^{pd}$ with a set $\FTCpm$ of axioms for FTCpm is given 
in Table~\ref{tab:FTCpm}.
Observe that $(\frac 11\eqp 1)=\tr$ follows immediately  from axiom~\eqref{pm9a}, 
which in turn with axiom~\eqref{pm8a} implies
\(
(\frac x1\eqp x)=\tr.
\)
We note that axioms~\eqref{pm10a} and~\eqref{pm11a} capture common properties of a field: 
$0\ne 1$ and absence of zero divisors.

\begin{theorem}[Soundness and completeness of $\FTCpm$]
\label{thm:completeness}
~\textup{
\begin{enumerate}[(i)]
 \setlength\itemsep{0.8mm}
\item \textit{(Soundness) The axioms of $\FTCpm$ are sound for the class of partial meadows, and}
\item \textit{(Completeness) If a universal formula $\phi \in \ULpm$ 
is true in all partial meadows,
then $\FTCpm \models \phi$.}
\end{enumerate}}
\end{theorem}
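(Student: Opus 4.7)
For (i), the plan is straightforward: verify each of the eleven axioms (pm1a)--(pm11a) in an arbitrary partial meadow $F^{pd}$ using the Tarski semantics of Section~\ref{sec:2.2}. Axioms (pm1a)--(pm7a), (pm10a), (pm11a) encode standard field properties (commutativity, associativity, distributivity, non-triviality, and absence of zero divisors), and their validity follows directly from $F^{pd}$ being a field. The only subtle cases will be (pm8a) and (pm9a): if the antecedent $y\nep 0$ (resp.\ $x\nep 0$) evaluates to $\fa$ under $\sigma$, the short-circuit implication yields $\tr$ by the semantic clauses for $\leftimp$; otherwise the divisor is non-zero, both sides of the consequent are defined, and the equality holds by the standard definition of division.

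For (ii), I would proceed by a semantic model-theoretic argument. Suppose $\phi\in\ULpm$ is valid in all partial meadows but $\FTCpm\not\models\phi$. Then there is a partial $\Sigpm$-algebra $A$ with $A\models\FTCpm$ and a valuation $\sigma$ such that $A,\sigma\not\models\phi$. By the axioms, $A$ carries a field structure: it is a commutative ring with $0\ne 1$ and no zero divisors from (pm1a)--(pm7a), (pm10a), (pm11a); and for non-zero $y$, combining (pm9a) with (pm8a) (instantiate $x:=y$) yields $y\cdot\tfrac 1 y \eqp 1$, so $\tfrac 1 y$ is the multiplicative inverse of $y$. Crucially, the axioms do not constrain $\tfrac{x}{0}$, which may be defined or undefined in $A$. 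I would then construct a partial meadow $F^{pd}$ with $|F^{pd}|=|A|$, inheriting from $A$ the ring operations and the value of $\tfrac{x}{y}$ for $y\ne 0$, but with $\tfrac{x}{0}$ forced to be undefined; by construction $F^{pd}$ is a partial meadow.

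The crux will be a key lemma: for every $\psi\in\ULpm$ and every valuation $\sigma$,
\[F^{pd},\sigma\models\psi\;\Rightarrow\;A,\sigma\models\psi\quad\text{and}\quad F^{pd},\sigma\notmodels\psi\;\Rightarrow\;A,\sigma\notmodels\psi.\]
This lemma will be proved by simultaneous structural induction on $\psi$. The atomic case $\psi\equiv t\eqp r$ rests on the observation that $\sigma(t)$ is defined in $F^{pd}$ precisely when its bottom-up evaluation never encounters a subterm $\tfrac{u}{v}$ with $\sigma(v)=0$, and in that case the evaluations in $A$ and $F^{pd}$ produce the same value, since the two structures coincide on all total operations and on division by non-zero elements. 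The connective cases for $\neg,\leftor,\leftand,\leftimp$ follow directly from the clauses of Section~\ref{sec:2.2} and the inductive hypothesis. Applying the lemma: the assumption on $\phi$ gives $F^{pd},\sigma\models\phi$, hence $A,\sigma\models\phi$, contradicting the choice of $A$. The main obstacle will be the atomic case of this lemma, where one must carefully verify that the only possible divergence between $A$ and $F^{pd}$ during term evaluation is at division by zero, and that the hypothesis of definedness in $F^{pd}$ rules out precisely such divergence.
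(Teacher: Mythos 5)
Your proposal is correct and follows essentially the same route as the paper: soundness by direct inspection of the axioms, and completeness by taking a model $B$ of $\FTCpm$, replacing its division function by the standard partial one to obtain a partial meadow $B'$, and transferring satisfaction from $B'$ back to $B$ by structural induction on open formulae. Your explicit formulation of the transfer lemma as a simultaneous induction on both $\models$ and $\notmodels$ is a welcome refinement of the paper's one-line appeal to induction, but it is the same argument.
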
 

\begin{table}[t]
\caption{The set $\FTCpm$ of axioms according to the first equivalence of \eqref{qF} (see page \pageref{qF})} 
\label{tab:FTCpm2a}
\vspace{1.6mm}
\centering
\hrule
\begin{align}
\nonumber	
\mathsf{import}~&\colon \Sigma^{pd}_m~(\mathrm{Table}~\ref{tab:FTCpm})
\\
\nonumber
\mathsf{variables}~&\colon x,y,z \colon  \mathsf{Number} 
\end{align}
\vspace{-8.4mm}
\begin{align}
\label{pm1c}
\tag{pm1b}
(\forallp x.\forallp y.\forallp z.\:(x+y)+z \eqp x + (y + z))&=\tr
\hspace{12mm}
\\[1mm]
\label{pm2c}
\tag{pm2b}
(\forallp x.\:x+0 \eqp x)&=\tr
\\[1mm]
\label{pm3c}
\tag{pm3b}
(\forallp x.\:x + (-x) \eqp 0)&=\tr 
\\[1mm]
\label{pm4c}
\tag{pm4b}
(\forallp x.\forallp y.\forallp z.\:x \cdot (y \cdot z) \eqp (x \cdot y) \cdot z)&=\tr
\\[1mm]
\label{pm5c}
\tag{pm5b}
(\forallp x.\forallp y.\:x \cdot y \eqp y \cdot x)&=\tr
\\[1mm]
\label{pm6c}
\tag{pm6b}
(\forallp x.\:1 \cdot x \eqp x)&=\tr
\\[1mm]
\label{pm7c}
\tag{pm7b}
(\forallp x.\forallp y.\forallp z.\:x \cdot (y+z) \eqp (x \cdot y) + (x \cdot z))&=\tr 
\\[1mm]
\label{pm8c}
\tag{pm8b}
\forallp x.\forallp y.(y \nep  0  
\leftimp
\frac{x}{y} 
\eqp x  \cdot  \frac 1y)&=\tr
\\[1mm] 
\label{pm9c} 
\tag{pm9b}
(\forallp x.\:x \nep  0  
\leftimp
\frac{x }{x}  \eqp 1)&=\tr  
\\[1mm] 
\label{pm10c}
\tag{pm10b}
(0\nep 1)&=\tr
\\[2mm]
\label{pm11c}
\tag{pm11b}
(\forallp x.\forallp y.(x\nep 0\leftand y\nep 0)
\leftimp x\cdot y\nep 0)&=\tr
\end{align}
\hrule
\end{table}

\begin{proof} Soundness is obvious by inspection of the axioms of $\FTCpm$. For completeness assume 
that $\phi$ is valid in all partial meadows. Now consider a model $B$ of $\FTCpm$.
$B$ is an expansion with a division function of a ring and in fact of a field. $B$ may differ from 
a partial meadow because the division function may be defined for some pairs of arguments $(a,b)$ 
with $b=0$. Now let $B'$ be obtained from $B$ by replacing the division function of $B$ by the 
standard partial division function. $B'$ is a partial meadow and therefore $B' \models \phi$. 
It can be shown by induction on the structure
of open formulae $\psi$ that for all valuations $\sigma$
it is the case that $B',\sigma \models \psi$ implies 
$B \models \psi$. We find that each model of $\FTCpm$ satisfies $\phi$, as required.
\end{proof}

\subsection{FTCpm: concise notations}
\label{sec:3.2}
To enhance readability, we introduce the following convention. 
\begin{cvn} 
\label{convention}
\mbox{\hspace{1mm}}
\\[-6mm]
\begin{itemize}
 \setlength\itemsep{.6mm}
\item[$(1)$]
We adopt the convention of writing 
$\phi$ instead of $\phi=\tr$.
 
\item[$(2)$]
$\forallp$-elimination: for expressions of the form $\forall_p x.\phi$ (with $\phi \in \Lsfolpm$) 
we adopt the convention of writing $\phi$. 
\vspace{.6mm}
\end{itemize}
\noindent
If not explicitly mentioned, $\phi$ here either stands for a syntactic formula (an element of $\Lsfolpm$), or for
$\,\vdash \phi$, or for
$\,\models \phi$, and this should then always be clear from the context.
\end{cvn}

\begin{table}[t]
\caption{Representation of the axioms of $\FTCpm$ according to 
Convention~\ref{convention}.(1)} 
\label{tab:FTCpm2b}
\vspace{1.6mm}
\centering
\hrule
\begin{align}
\nonumber	
\mathsf{import}~&\colon \Sigma^{pd}_m~(\mathrm{Table}~\ref{tab:FTCpm})
\\
\nonumber
\mathsf{variables}~&\colon x,y,z \colon  \mathsf{Number} 
\end{align}
\vspace{-8.4mm}
\begin{align}
\label{pm1C}
\tag{pm1c}
\forallp x.\forallp y.\forallp z.\:
(x+y)+z &\eqp x + (y + z)
\hspace{20mm}
\\[1mm]
\label{pm2C}
\tag{pm2c}
\forallp x.\:
x+0 &\eqp x
\\[1mm]
\label{pm3C}
\tag{pm3c}
\forallp x.\:
x + (-x) &\eqp 0 
\\[1mm]
\label{pm4C}
\tag{pm4c}
\forallp x.\forallp y.\:
x \cdot (y \cdot z) &\eqp (x \cdot y) \cdot z
\\[1mm]
\label{pm5C}
\tag{pm5c}
\forallp x.\forallp y.\:
x \cdot y &\eqp y \cdot x
\\[1mm]
\label{pm6C}
\tag{pm6c}
\forallp x.\:
1 \cdot x &\eqp x
\\[1mm]
\label{pm7C}
\tag{pm7c}
\forallp x.\forallp y.\:
x \cdot (y+z) &\eqp (x \cdot y) + (x \cdot z) 
\\[1mm]
\label{pm8C}
\tag{pm8c}
\forallp x.\forallp y.\:
y \nep  0  
\leftimp
\frac{x}{y} 
&\eqp x  \cdot  \frac 1y 
\\[1mm] 
\label{pm9C} 
\tag{pm9c}
\forallp x.\:
x \nep  0  
\leftimp
\frac{x }{x} 
&\eqp 1  
\\[1mm] 
\label{pm10C}
\tag{pm10c}
0
&\nep 1
\\[2mm]
\label{pm11C}
\tag{pm11c}
\forallp x.\forallp y.\:
(x\nep 0\leftand y\nep 0)
\leftimp x\cdot y&\nep 0
\end{align}
\hrule
\end{table}

\begin{table}[t]
\caption{Representation of the set $\FTCpm$ of axioms according to Convention~\ref{convention}} 
\label{tab:FTCpm2}
\vspace{1.6mm}
\centering
\hrule
\begin{align}
\nonumber	
\mathsf{import}~&\colon \Sigma^{pd}_m~(\mathrm{Table}~\ref{tab:FTCpm})
\\
\nonumber
\mathsf{variables}~&\colon x,y,z \colon  \mathsf{Number} 
\end{align}
\vspace{-8.4mm}
\begin{align}
\label{pm1}
\tag{pm1}
(x+y)+z &\eqp x + (y + z)
\\[1mm]
\label{pm2}
\tag{pm2}
x+0 &\eqp x
\\[1mm]
\label{pm3}
\tag{pm3}
x + (-x) &\eqp 0 
\\[1mm]
\label{pm4}
\tag{pm4}
x \cdot (y \cdot z) &\eqp (x \cdot y) \cdot z
\\[1mm]
\label{pm5}
\tag{pm5}
x \cdot y &\eqp y \cdot x
\\[1mm]
\label{pm6}
\tag{pm6}
1 \cdot x &\eqp x
\\[1mm]
\label{pm7}
\tag{pm7}
x \cdot (y+z) &\eqp (x \cdot y) + (x \cdot z) 
\\[1mm]
\label{pm8}
\tag{pm8}
y \nep  0  
\leftimp
\frac{x}{y} 
&\eqp x  \cdot  \frac 1y 
\\[1mm] 
\label{pm9} 
\tag{pm9}
x \nep  0  
\leftimp
\frac{x }{x} 
&\eqp 1  
\\[1mm] 
\label{pm10}
\tag{pm10}
0
&\nep 1
\\[2mm]
\label{pm11}
\tag{pm11}
(x\nep 0\leftand y\nep 0)
\leftimp x\cdot y&\nep 0
\end{align}
\hrule
\end{table}
 
In Tables~\ref{tab:FTCpm2a}--\ref{tab:FTCpm2}, we display the axioms of $\FTCpm$ in
different, equivalent formats.
Table~\ref{tab:FTCpm} represents the $\forallp$-elimination of the axioms of Table~\ref{tab:FTCpm2a}, and Table~\ref{tab:FTCpm2} shows the $\forallp$-elimination of the axioms of Table~\ref{tab:FTCpm2b}.
Of course, Convention~\ref{convention} 
is justified by the Tarski-semantics discussed in Section~\ref{sec:2.2}. 

As an example,
we give for axiom~\eqref{pm9a} of Table~\ref{tab:FTCpm} a diagram that illustrates 
this {convention},
using the first equivalence of~\eqref{qF},
i.e. 
\(
F^{pd}\models\phi = \tr\iff F^{pd}\models(\forallp x.\phi) = \tr
\)
(see page~\pageref{qF}), in which the prefix ``$F^{pd}\models$ '' is omitted:
\\[3.14mm]
\mbox{~}\hspace{4mm}
\begin{tabular}[t]{ll}
(Tbl.\ref{tab:FTCpm})~~~$(x\nep 0\leftimp \frac xx \eqp 1)=\tr\quad\iff\quad
(\forallp x.x\nep 0\leftimp \frac xx \eqp 1)=\tr$ 
&(Tbl.\ref{tab:FTCpm2a})
\\[3.14mm]
\mbox{~}~~\hspace{24.2mm}
$\twoheaddownarrow
\hspace{58.4mm}
\twoheaddownarrow $
&
\\[3.14mm]
(Tbl.\ref{tab:FTCpm2})
\hspace{3mm}~~$x\nep 0\leftimp \frac xx \eqp 1
\hspace{12.6mm} 
\twoheadleftarrow
\hspace{7.4mm}
\forallp x.x\nep 0\leftimp \frac xx \eqp 1$
&(Tbl.\ref{tab:FTCpm2b})
\end{tabular}
\vspace{3mm}
\\[2.24mm]\indent
Convention~\ref{convention} and these examples relate to both the axioms in Section~\ref{sec:2.1}
and those for partial meadows.
In what follows, we will work mostly according to this convention and the axioms in Table~\ref{tab:FTCpm2}, 
but where practical, we will use more explicit representation. A typical example is the following consequence:
\[
\text{If $t$ is defined, thus $t\eqp t$, and $\forallp x.\phi(x)$ is a valid assertion, 
then so is $\phi(t)$.}
\]
This `instantiation consequence' is justified by axiom (a1), i.e. 
$((t\eqp t\leftand\forallp  x.\phi) ~\leftimp~ \phi[t/x])=\tr$. 
An example of the instantiation consequence applied to axiom~\eqref{pm11} and the (derivable)
identity $-1\eqp-1$ is then
\((-1\nep 0\leftand y\nep 0)\leftimp -1\cdot y \nep 0.
\)

We end this section with a final remark on the axioms for partial meadows.
Observe that the condition $(x\nep 0\leftand y\nep 0)$ of axiom~\eqref{pm11a} in 
Table~\ref{tab:FTCpm} and in its related versions in Tables~\ref{tab:FTCpm2a}--\ref{tab:FTCpm2}
can be replaced by $(y\nep 0\leftand x\nep 0)$.

\subsection{\smash{Some consequences of FTCpm and \EqLpm}}
\label{sec:3.3}

Table~\ref{FTCpm3} lists some familiar consequences (in the sense of 
$\models$) of the axioms of $\FTCpm$ (Table~\ref{tab:FTCpm}). 
Note that 
the order of the conjuncts in Assertion~\eqref{eq:fracfrac} is also not relevant.

\begin{table}[t]
\caption{Assertions of fracterm calculus of partial meadows (with $x,y,u,v\in V_{var}$)} 
\label{FTCpm3}
\vspace{1.6mm}
\centering
\hrule
\begin{align}
\label{pmComm+}
\tag{A1}
x+y &~\eqp~ y+x
\\[1mm]
\label{eq:0}
\tag{A2}
0\cdot x&~\eqp~ 0
\\[1mm] 
\label{eq:-x}
\tag{A3}
x \nep 0 
&~\leftimp~
{-}x			
\nep
0
\\[1mm] 
\label{eq:minus}
\tag{A4}
y \nep  0 
&~\leftimp~
{-}\frac{x}{y}		
\eqp
\frac{-x}y
\\
\label{eq:prod}
\tag{A5}
(y \nep  0  \leftand  v \nep  0) 
&~\leftimp~
\frac{x}{y}	\cdot \frac{u}{v}		
\eqp
\frac{x \cdot u}{y \cdot v}
\\
\label{eq:fracfrac}
\tag{A6}
(y \nep 0  \leftand u \nep 0 \leftand  v \nep 0) 
&~\leftimp~
\dfrac{(\frac{x}{y})}{( \frac{u}{v})}		
\eqp
\frac{x \cdot  v}{y \cdot u}
\\
\label{eq:plus}
\tag{A7}
(y \nep 0  \leftand  v \nep 0 ) 
&~\leftimp~
\frac{x}{y} + \frac{u}{v}	
\eqp
\frac{(x\cdot v) + (y \cdot u)}{y \cdot v}
\end{align}
\hrule
\end{table}

\begin{prn}
\label{prop:pm}
The assertions \eqref{pmComm+}--\eqref{eq:plus} in Table~\ref{FTCpm3} follow from 
$\FTCpm$. 
\end{prn}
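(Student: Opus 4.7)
The plan is to derive each assertion by standard ring and field manipulations inside the sequential logic, taking care at every step that the definedness premises required by \eqref{pm8} and \eqref{pm9} are in force before those axioms are invoked. I would first dispose of (A1) and (A2), which are purely ring-theoretic and involve only total operations. For (A2), expand $0 \cdot x \eqp (0 + 0) \cdot x \eqp 0 \cdot x + 0 \cdot x$ via \eqref{pm2}, \eqref{pm5} and \eqref{pm7}, and cancel using \eqref{pm3}. For (A1), follow the classical derivation of additive commutativity in a commutative unital ring: evaluate $(1 + 1) \cdot (x + y)$ in two ways, once as $(1+1) \cdot x + (1+1) \cdot y$ via \eqref{pm7} and once as $(x+y) + (x+y)$ via \eqref{pm5}, \eqref{pm6} and \eqref{pm7}, yielding $(x + x) + (y + y) \eqp (x + y) + (x + y)$, and then cancel $x$ on the left and $y$ on the right (right cancellation is immediate from \eqref{pm2} and \eqref{pm3}; left cancellation uses the standard preliminary $-x + x \eqp 0$, itself a short consequence of \eqref{pm1}--\eqref{pm3}).

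Next I would establish two reusable auxiliary identities. First, $-1 \nep 0$: if $-1 \eqp 0$ then $1 \eqp 1 + 0 \eqp 1 + (-1) \eqp 0$ by \eqref{pm2}, replacement, and \eqref{pm3}, contradicting \eqref{pm10}. Second, the product-of-reciprocals identity
\[
(y \nep 0 \leftand v \nep 0) \leftimp \tfrac{1}{y} \cdot \tfrac{1}{v} \eqp \tfrac{1}{y \cdot v}.
\]
For this, \eqref{pm11} gives $y \cdot v \nep 0$, so $\tfrac{1}{y \cdot v}$ is defined; rearranging $(y \cdot v) \cdot \bigl(\tfrac{1}{y} \cdot \tfrac{1}{v}\bigr)$ via \eqref{pm4}, \eqref{pm5}, \eqref{pm8} and \eqref{pm9} shows this product equals $1$; multiplying both sides by $\tfrac{1}{y \cdot v}$ and using $\tfrac{1}{y \cdot v} \cdot (y \cdot v) \eqp 1$ (again via \eqref{pm8} and \eqref{pm9}) yields the claim.

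With this toolbox most of the remaining assertions are short. (A3) follows from $-x \eqp (-1) \cdot x$ together with $-1 \nep 0$ and \eqref{pm11}; (A4) combines \eqref{pm8} with the ring identity $-(x \cdot \tfrac{1}{y}) \eqp (-x) \cdot \tfrac{1}{y}$; (A5) rewrites both sides via \eqref{pm8} (the right-hand side being defined by \eqref{pm11}) and finishes with the product-of-reciprocals identity; and (A7) rewrites each summand as $x \cdot \tfrac{1}{y} \cdot (v \cdot \tfrac{1}{v})$ and $u \cdot \tfrac{1}{v} \cdot (y \cdot \tfrac{1}{y})$, regroups using left-distributivity \eqref{pm7}, and collapses via the same identity. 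The main obstacle is (A6): before \eqref{pm8} can be applied to the outer fraction one must establish $\tfrac{u}{v} \nep 0$ from $u, v \nep 0$, which is not granted by any single axiom. I would obtain this by computing $\tfrac{u}{v} \cdot \tfrac{v}{u} \eqp 1$ via (A5) and observing that $\tfrac{u}{v} \eqp 0$ would then force $0 \eqp 1$ by (A2), contradicting \eqref{pm10}. Once $\tfrac{u}{v} \nep 0$ is available, multiplying $\tfrac{1}{\tfrac{u}{v}}$ by the unit $\tfrac{u}{v} \cdot \tfrac{v}{u} \eqp 1$ shows $\tfrac{1}{\tfrac{u}{v}} \eqp \tfrac{v}{u}$, and the identity then follows from \eqref{pm8} and (A5).
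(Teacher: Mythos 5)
Your proposal is correct and follows essentially the same route as the paper's proof: the same classical derivations of \eqref{pmComm+} and \eqref{eq:0}, the same auxiliary facts $-1 \nep 0$ and $\tfrac 1y \cdot \tfrac 1v \eqp \tfrac{1}{y\cdot v}$, and the same reduction of \eqref{eq:fracfrac} to $\tfrac{1}{(\frac uv)} \eqp \tfrac vu$. The only divergences are minor: you justify $-1 \nep 0$ and $\tfrac uv \nep 0$ by contradiction rather than by the paper's direct derivations, which is unproblematic since the proposition is a semantic consequence claim (in the sense of $\models$), and you are in fact more explicit than the paper about needing $\tfrac uv \nep 0$ before applying \eqref{pm8} to the outer fraction in \eqref{eq:fracfrac}.
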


\begin{proof}
Adapting to Convention~\ref{convention}, we  apply Lemma~\ref{la:nuttig} tacitly and
start with three auxiliary results:
\begin{enumerate}[(a)]
\setlength\itemsep{-1mm}
\item 
\label{(a)}
\underline{$0 + x \eqp x$}: $0 + -(-x) \eqp (x + -x) + -(-x) \eqp x + (-x + -(-x)) \eqp x + 0 \eqp x$, so 
 $0 + -(-x) \eqp x$, hence
$~ 0 + x \eqp 0 + (0 + -(-x)) \eqp (0 + 0) + -(-x) \eqp 0 + -(-x) \eqp x$.
\item 
\label{(b)}
\underline{$-(-x)\eqp x$}: $-(-x) \stackrel{(a)}\eqp 0 + -(-x) \eqp (x + -x) + -(-x) \eqp x + (-x + -(-x)) = x + 0 \eqp x$.
\item
\label{(c)}
\underline{$-x + x \eqp 0$}:
$-x + x \stackrel{(b)}\eqp -x + -(-x) \eqp 0$.
\end{enumerate}
\underline{Assertion~\eqref{pmComm+}}. 
Derive
$(1+1)\cdot(x+y) \eqp (1+1)\cdot x + (1+1)\cdot y \eqp (x+x) + (y+y)$ and
$(1+1)\cdot(x+y) \eqp (x+y)\cdot(1+1) \eqp (x+y) + (x+y)$. 
So after the left addition of $-x$ and the right addition of $-y$ we find 
with associativity and auxiliary results \eqref{(a)}--\eqref{(c)} that $x+y\eqp y+x$.
\\[1.22mm]
\underline{Assertion~\eqref{eq:0}}. Derive
$0\cdot x \eqp (0+0)\cdot x\eqp 0\cdot x+0\cdot x$, hence $0 \eqp 0\cdot x + (-(0\cdot x))
\eqp (0\cdot x + 0\cdot x) + (-(0\cdot x))
\eqp 0\cdot x + (0\cdot x + (-(0\cdot x)))
\eqp 0\cdot x + 0
\eqp 0\cdot x$.
\\[1.22mm]
\underline{Assertion~\eqref{eq:-x}}.
By instantiation of axiom~\eqref{pm11},
$(-1\nep 0\leftand y\nep 0)\leftimp -1\cdot y\nep 0$.
Derive
\begin{align}
\nonumber
(-x)\cdot y 
&\eqp(-x)\cdot y+(x\cdot y + -(x\cdot y))\eqp((-x)\cdot y+x\cdot y) + -(x\cdot y)
\\
\label{eq:e}
\tag{d1}
&\eqp(((-x)+x)\cdot y) + -(x\cdot y)\stackrel{\eqref{(a)},\eqref{(c)},\eqref{eq:0}}\eqp -(x\cdot y). 
\end{align}
By instantiation and axiom~\eqref{pm6},
$-1\cdot y = -y$, so 
by (i1), $(-1\nep 0\leftand y\nep 0)\leftimp -y\nep 0$.
By (p5) and $0\nep 1$, $-0\nep -1$ and by auxiliary result~\eqref{(c)}, $-0\eqp 0$, hence $0\nep -1$
so by (p3) and consequence~\eqref{(V)} and (i3), $-1\nep 0$.
By 
consequence~\eqref{(V)} and $-1\nep 0$ and (i3), $y\nep 0\leftimp -y\nep 0$.
\\[1.22mm]
\underline{Assertion~\eqref{eq:minus}}. By axiom~\eqref{pm8} and \eqref{eq:e}, 
$\dfrac{-x}y\eqp -x\cdot\dfrac 1y\eqp -(x\cdot\dfrac 1y)\eqp {-}\dfrac xy$.
\\[1.22mm]
\underline{Assertion~\eqref{eq:prod}}.
By~\eqref{pm8} and \eqref{pm6} with $x=1$ we find $y\nep 0\leftimp \frac 1y\eqp\frac 1y$,
and similarly $v\nep 0\leftimp \frac 1v\eqp\frac 1v$. 
By~\eqref{pm11} and~\eqref{pm9}, $(y\nep 0\leftand v\nep 0)\leftimp\frac{y\cdot v}{y\cdot v}\eqp 1$.  
Hence 
\begin{equation}
\label{yeq}
\tag{d2}
\frac{y\cdot v}{y\cdot v}\cdot \frac 1y
\stackrel{\eqref{pm8}}\eqp ((y\cdot v)\cdot \frac 1{y\cdot v})\cdot \frac 1y
\eqp ((y\cdot v)\cdot \frac 1y)\cdot \frac 1{y\cdot v}
\stackrel{\eqref{pm8}}\eqp \frac{(y\cdot v)\cdot \frac 1y}{y\cdot v},
\end{equation}
so 
\begin{align}
\label{xeq}
\nonumber
\frac 1y\cdot\frac 1v
&
\eqp \Big(\frac{y\cdot v}{y\cdot v}\cdot \frac 1y\Big)\cdot\frac 1v
\stackrel{\eqref{yeq}}\eqp \frac{((y\cdot \frac 1y)\cdot v)}{y\cdot v}\cdot\frac 1v
\stackrel{\eqref{pm8}}\eqp \frac{(\frac yy\cdot v)}{y\cdot v}\cdot\frac 1v
\\
\tag{d3}
&
\eqp \frac{v}{y\cdot v}\cdot\frac 1v
\stackrel{\eqref{pm8}}\eqp (v\cdot\frac{1}{y\cdot v})\cdot\frac 1v
\eqp \frac{1}{y\cdot v},
\end{align}
and thus
\[
\dfrac{x}{y}	\cdot \dfrac{u}{v}
\stackrel{\eqref{pm8}}\eqp (x\cdot \dfrac 1y)\cdot (u\cdot\dfrac 1v)
\stackrel{\eqref{xeq}}\eqp (x\cdot u) \cdot\dfrac 1{y\cdot v}
\stackrel{\eqref{pm8}}\eqp \dfrac {x\cdot u}{y\cdot v}.
\vspace{1mm}
\]
\underline{Assertion~\eqref{eq:fracfrac}}. If $y\nep 0$, $u\nep 0$ and $v\nep 0$, then 
$\frac xy\eqp\frac xy, \frac vu\eqp\frac vu$ and $\frac uv\eqp\frac uv$, and
\[
\frac 1{(\frac uv)}
\eqp \frac{(\frac vu)}{(\frac vu)}\cdot \frac 1{(\frac uv)}
\stackrel{\eqref{eq:prod}}\eqp \frac{(\frac vu)}{(\frac {v\cdot u}{u\cdot v})}
\eqp \frac{(\frac vu)}{1}
\eqp \frac vu,
\]
and thus
\[
\dfrac{(\frac{x}{y})}{( \frac{u}{v})}
\stackrel{\eqref{pm8}}\eqp \frac xy\cdot \frac 1{(\frac uv)}
\eqp \frac xy\cdot \frac vu
\stackrel{\eqref{eq:prod}}\eqp \frac{x \cdot  v}{y \cdot u}.
\vspace{1mm}
\]
\underline{Assertion~\eqref{eq:plus}}.
If $y \nep 0$ and $v \nep 0$, then $\frac yy\eqp 1$ and $\frac vv\eqp 1$.
Hence, with~\eqref{pm5}--\eqref{pm7}, \eqref{pm8} and \eqref{eq:prod} we find
\begin{align}
\nonumber
\dfrac xy +\dfrac uv
&\eqp \dfrac{x\cdot v}{y\cdot v}+\dfrac{y\cdot u}{y\cdot v}
\eqp
((x\cdot v)\cdot\dfrac 1{y\cdot v})+((y\cdot u)\cdot \dfrac 1{y\cdot v})
\\
\nonumber
&\eqp((x\cdot v)+(y\cdot u))\cdot \dfrac 1{y\cdot v}
\eqp \dfrac{(x\cdot v)+(y\cdot u)}{y\cdot v}. 
\\[-.46mm]
\nonumber
\tag*{\qedhere}
\end{align}
\end{proof}

\begin{dfn} 
A \textbf{flat fracterm} is an expression of the form $\frac{p}{q}$ 
that contains precisely one occurrence (i.e. the top level occurrence) 
of the division operator, thus $p$ and $q$ are \textbf{division free} terms.
\end{dfn}

\begin{theorem}[Conditional fracterm flattening for partial meadows]
\label{CFTF4PM}
For each term $t$ there is a division free term $s$ and a flat fracterm  
$r$ such that
\\[-6mm]
\textup{\begin{enumerate}[(i)]
\setlength\itemsep{0mm}
\item
\textit{$s \nep 0 \leftimp t \eqp r$ holds in all partial meadows, and}
\item
\textit{$s \eqp 0 \leftand t \eqp t$ does not hold in any partial meadow under any valuation.}
\end{enumerate}
}
\end{theorem}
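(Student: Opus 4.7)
The plan is structural induction on $t$, building $s$ and $r=\tfrac{p}{q}$ compositionally from the flattenings of the subterms and invoking the fracterm arithmetic collected in Proposition~\ref{prop:pm}, together with axiom~\eqref{pm11} and the fact that a partial meadow, being an expansion of a field, has no zero divisors.

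For the base case---$t$ a constant in $\{0,1\}$ or a variable---take $s:=1$ and $r:=\tfrac{t}{1}$. For (i), $\tfrac{t}{1}\eqp t$ follows in any partial meadow from~\eqref{pm9}, \eqref{pm8} and~\eqref{pm6} using $1\nep 0$, and~\eqref{pm10} supplies that condition. For (ii), $1\eqp 0$ is refuted by~\eqref{pm10}, so $s\eqp 0\leftand t\eqp t$ cannot hold.

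For the induction step, assume that for each proper subterm $t_i$ the hypothesis supplies a division-free $s_i$ and a flat fracterm $r_i=\tfrac{p_i}{q_i}$ satisfying (i) and (ii); note that under $s_i\nep 0$, clause~(i) forces $r_i$ to be defined, hence $q_i\nep 0$. For $t\equiv -t_1$ take $s:=s_1$ and $r:=\tfrac{-p_1}{q_1}$, justified by~\eqref{eq:minus}. For $t\equiv t_1+t_2$ take $s:=s_1\cdot s_2$ and $r:=\tfrac{p_1\cdot q_2+q_1\cdot p_2}{q_1\cdot q_2}$, justified by~\eqref{eq:plus}. For $t\equiv t_1\cdot t_2$ take $s:=s_1\cdot s_2$ and $r:=\tfrac{p_1\cdot p_2}{q_1\cdot q_2}$, justified by~\eqref{eq:prod}. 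For $t\equiv \tfrac{t_1}{t_2}$ take $s:=s_1\cdot s_2\cdot p_2$ and $r:=\tfrac{p_1\cdot q_2}{q_1\cdot p_2}$, justified by~\eqref{eq:fracfrac}. In each case $s$ is manifestly division-free and $r$ is a flat fracterm, and (i) is verified by chaining: $s\nep 0$ forces, via absence of zero divisors, every factor of $s$ to be nonzero; the IH then yields $t_i\eqp r_i$ (with $q_i\nep 0$); and the relevant identity from Proposition~\ref{prop:pm} delivers $t\eqp r$.

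Condition (ii) requires that $s\eqp 0$ force $t$ to be undefined. Since $s$ is division-free it is defined, so $s\eqp 0$ entails that some factor vanishes. If $s_i\eqp 0$, the IH makes $t_i$ undefined, and undefinedness of an argument propagates through the total operations $+,-,\cdot$ and through $\tfrac{\cdot}{\cdot}$ to render $t$ undefined (by the weak-substitution property~(p6)). In the division case, if $p_2\eqp 0$ while $s_2\nep 0$, then $t_2\eqp\tfrac{0}{q_2}\eqp 0$ by~\eqref{eq:0} and~\eqref{pm8}, so $\tfrac{t_1}{t_2}$ is division by zero and is undefined. The main obstacle is precisely the division case: $s$ has to be enlarged by the factor $p_2$ to witness non-vanishing of the inner denominator $t_2$, and getting this extra factor into $s$ \emph{without} breaking the equivalence between $s\nep 0$ and definedness of $t$---that is, keeping (i) and (ii) in lockstep---is what dictates the exact shape of $s$ at each inductive step.
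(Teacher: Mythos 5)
Your proposal is correct and follows essentially the same route as the paper: structural induction with the same choices of $s$ and $r$ at every step (including $s:=s_1\cdot s_2\cdot p_2$ in the division case), justified by Assertions~\eqref{eq:minus}--\eqref{eq:plus} and the direction ``product nonzero implies each factor nonzero'' for (i) and axiom~\eqref{pm11} for (ii). The only quibble is a swapped attribution: the fact used in (i) is just $0\cdot y\eqp 0$ (the paper's~\eqref{ochoch}), while absence of zero divisors is what is needed in (ii) to conclude that $s\eqp 0$ forces some factor to vanish.
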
 

\begin{proof}
Construction of $s$ and $r$ by induction on the structure of $t$. 
If $t\in\{0, 1, x\}$ with $x\in V_{var}$, then take $s\equiv 1$ and $r \equiv \frac t1$, and (i) and (ii) follow immediately.
\\[1.22mm]
\underline{If $t\equiv {-}u$}, let division free term $s$ and flat fracterm $r\equiv\frac{n}{d}$ 
be such that (i) and (ii) hold for $u$. 
By soundness  (Thm.\ref{thm:completeness}) and Assertion~\eqref{eq:minus}, 
$s \nep 0 \leftimp  t\eqp \frac{-n}d$ holds in all partial meadows, 
and since $\frac{-n}d$ is a flat fracterm, this proves (i). 

(ii)
If 
$s \eqp 0 \leftand u \eqp u$ 
is not satisfied in any partial meadow under any valuation, then 
this implies that in any partial meadow with a  valuation that satisfies 
$s \eqp 0$, also ${-}u$ is undefined.
Hence 
$s \eqp 0 \leftand {-}u \eqp {-}u$ does not hold in any partial meadow under 
any valuation.
\\[1.22mm]
\underline{If $t \equiv u+v$}, let division free terms $s_u,s_v$ and flat fracterms 
\[
r_u \equiv \dfrac{n_u}{d_u},\quad r_v \equiv \dfrac{n_v}{d_v}\]
be such that $s_u \nep 0 \leftimp u \eqp r_u$ and 
$s_v \nep 0 \leftimp v \eqp r_v$ are true in all partial meadows, while
$s_u \eqp 0 \leftand u \eqp u$ and $s_v \eqp 0 \leftand v \eqp v$ 
are not satisfied in any partial meadow under any valuation.
Now take 
\(
s \equiv s_u \cdot s_v,\quad 
r \equiv \frac{(n_u \cdot d_v)+(d_u \cdot n_v)}{d_u \cdot d_v}.
\)

(i) We first show that  
\begin{equation}
\label{ochoch}
\tag{d4}
\text{$\FTCpm\vdash x\cdot y\nep 0\leftimp x\nep 0$.} 
\end{equation}
By (p7) and \eqref{(V)}, $y\eqp y\leftimp (x\eqp 0\leftimp x\cdot y\eqp 0\cdot y)$. So, by (p2), 
$x\eqp 0\leftimp x\cdot y\eqp 0\cdot y\equiv C[x\cdot y\eqp 0\cdot y]$. 
By Assertion~\eqref{eq:0}, $0\cdot y\eqp 0$, so with (i1), $C[x\cdot y\eqp 0]$, i.e. 
$x\eqp 0\leftimp x\cdot y\eqp 0$.
By inference rule (i5) and the derivability of $x\eqp 0\leftor x\nep 0$ (see \eqref{Ex:1} on page~\pageref{Ex:1})
and of $x\cdot y\eqp 0\leftor x\cdot y\nep 0$, 
\[(x\eqp 0\leftor x\nep 0)\leftimp (x\cdot y\eqp 0\leftor x\nep 0),
\]
so by 
\eqref{Ex:1} and (i3), $x\cdot y\eqp 0\leftor x\nep 0$. Hence $\FTCpm\vdash x\cdot y\nep 0\leftimp x\nep 0$.
By soundness,
$s \nep 0\leftimp s_u \nep 0$ and $s \nep 0\leftimp s_v \nep 0$
hold in all partial meadows.
By Assertion~\eqref{eq:plus}, 
$A\models s \nep 0 \leftimp u+v\eqp r$, 
and since $r$ is a flat fracterm, this proves (i). 

(ii) If 
$s_u \eqp 0 \leftand u \eqp u$ and $s_v \eqp 0 \leftand v \eqp v$ 
are not satisfied in any partial meadow under any valuation,  
this implies by axiom~\eqref{pm11} that in any partial meadow and valuation that satisfies $s\eqp 0$,
at least one of
$s_u \eqp 0$ and $s_v \eqp 0$ is satisfied, which 
implies
that $u+v$ is undefined.
Hence 
$s \eqp 0 \leftand u+v \eqp u+v$ does not hold in any partial meadow under 
any valuation.
\\[1.22mm]
\underline{If $t \equiv u\cdot v$}, let division free terms $s_u,s_v$ and flat fracterms 
\(
r_u \equiv \dfrac{n_u}{d_u},\quad r_v \equiv \dfrac{n_v}{d_v}\)
be such that $s_u \nep 0 \leftimp u \eqp r_u$ and 
$s_v \nep 0 \leftimp v \eqp r_v$ are true in all partial meadows, while
$s_u \eqp 0 \leftand u \eqp u$ and $s_v \eqp 0 \leftand v \eqp v$ 
are not satisfied in any partial meadow under any valuation.
Now take 
\[
s \equiv s_u \cdot s_v,\quad 
r \equiv \dfrac{n_u \cdot n_v}{d_u \cdot d_v}.
\]
and proceed as in the previous case, using Assertion~\eqref{eq:prod}.
\\[1.22mm]
\underline{If $t \equiv \dfrac{u}{v}$}, let division free terms $s_u,s_v$ and flat fracterms 
\[
r_u \equiv \dfrac{n_u}{d_u},\quad r_v \equiv \dfrac{n_v}{d_v}\]
be such that $s_u \nep 0 \leftimp u \eqp r_u$ and 
$s_v \nep 0 \leftimp v \eqp r_v$ are true in all partial meadows, while
$s_u \eqp 0 \leftand u \eqp u$ and $s_v \eqp 0 \leftand v \eqp v$ 
are not satisfied in any partial meadow under any valuation.
Now take 
\[
s \equiv s_u \cdot s_v \cdot n_v,\quad 
r \equiv \frac{n_u \cdot d_v}{d_u \cdot n_v}.
\]
(i) 
By \eqref{ochoch} and soundness
it follows that $s \nep 0\leftimp s_u \nep 0$ 
holds in all partial meadows, and likewise that $s \nep 0\leftimp n_v \nep 0$ 
and $s \nep 0\leftimp n_v \nep 0$ hold in all partial meadows.

From the assumption that $s_u \nep 0 \leftimp u \eqp \frac{n_u}{d_u}$ is true in 
all partial meadows, it follows that $s_u \nep 0 \leftimp d_u \ne_\mathsf{p} 0$
is true in all partial meadows, and similarly, $s_v \nep 0 \leftimp d_v \ne_\mathsf{p} 0$.

Collecting the above, we find that
$s \nep 0\leftimp (d_u \ne_\mathsf{p} 0 \leftand n_v\ne_\mathsf{p} 0\leftand d_v \ne_\mathsf{p} 0)$
holds in all partial meadows. 
By Assertion~\eqref{eq:fracfrac} it follows that 
\[
(d_u\nep 0 \leftand n_v\nep 0\leftand  d_v \nep 0)\leftimp 
\dfrac{(\tfrac{n_u}{d_u})}{(\tfrac{n_v}{d_v})}\eqp \dfrac{n_u\cdot d_v}{d_u\cdot n_v}
\] 
holds in all partial 
meadows. Combining the latter two implications, we find that
\[s \nep 0 \leftimp t \eqp 
\dfrac{n_u\cdot d_v}{d_u\cdot n_v}
\]
holds in all partial 
meadows. 
Because $r_u\equiv\frac{n_u}{d_u}$ and 
$r_v \equiv \frac{n_v}{d_v}$ are flat fracterms, so is $r$.

(ii) If 
$s_u \eqp 0 \leftand u \eqp u$ and $s_v \eqp 0 \leftand v \eqp v$ 
are not satisfied in any partial meadow under any valuation, then 
this implies that in any partial meadow with a  valuation that satisfies 
$s \eqp 0$, at least one of $u$ and $v$ is undefined, which in turn implies
that $\frac uv$ is undefined.
Hence 
$s \eqp 0 \leftand \frac uv \eqp \frac uv$ does not hold in any partial meadow under 
any valuation.
\end{proof}

In $\EqLpm$,
fracterm flattening can be obtained in a more direct manner.
When using the same notation as in Theorem~\ref{CFTF4PM}, except that we write $\frac pq$ for $r$ (thus, 
$p$ and $q$ division free), the following equation is not valid unless $t$ is always defined: 
$\smash{(t \eqp \frac pq) = \tr}$  and also
$\smash{(t \eqp \frac{p \cdot s}{q \cdot s}) = \tr}$ fails. 
To see this complication notice that for instance $(\frac{1}{x} \eqp \frac{1}{x}) = \tr$ fails, 
while $(x \nep 0 \leftimp \frac{1}{x} \eqp \frac{1}{x}) = \tr$ holds.
Hence, we also find the following $\EqLpm$ identities:
\begin{align*}
(s \nep 0 \leftimp x \eqp t) = (s \nep 0 \leftimp x \eqp \tfrac pq),
\\
(s \eqp 0 \leftimp x \eqp t) = (s \eqp 0 \leftimp 0 \eqp \tfrac 10).
\end{align*}

\begin{prn} 
\label{prop:X}
For each term $t$ there is a division free term $s$ and a flat fracterm $\frac{p}{q}$ such that 
the $\EqLpm$-identity
\[(x \eqp t) = (x \eqp \tfrac{p \cdot s}{q \cdot s})\]
is valid in all partial meadows.
\end{prn}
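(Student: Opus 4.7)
The plan is to reuse the division-free term $s$ and flat fracterm $r \equiv \frac{p}{q}$ produced by Theorem~\ref{CFTF4PM}, and take exactly these $s$, $p$, $q$ as the witnesses for the present proposition. The first step I would take is to upgrade clauses (i) and (ii) of Theorem~\ref{CFTF4PM} into a cleaner equivalence: since $s$ is division-free it is always defined, so combining (i) with the contrapositive reading of (ii) yields that at every valuation $\sigma$, $s \nep 0$ holds if and only if $t$ is defined at $\sigma$.

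With this equivalence in hand, I would case-split on the value of $\sigma(s)$. If $\sigma(s) \ne 0$, then $t$ is defined and $t \eqp \tfrac{p}{q}$, forcing $q \nep 0$ at $\sigma$; axiom~\eqref{pm11} then gives $q\cdot s \nep 0$, and combining Assertion~\eqref{eq:prod} with $\tfrac{s}{s}\eqp 1$ from axiom~\eqref{pm9} reduces $\tfrac{p\cdot s}{q\cdot s}$ to $\tfrac{p}{q}\eqp t$. In particular $(x \eqp t)$ and $(x \eqp \tfrac{p\cdot s}{q\cdot s})$ agree as truth values at $\sigma$, matching clause~(1) or clause~(2) of the definition of $A,\sigma\models\phi=\psi$ from Section~\ref{sec:2.2} according to whether $\sigma(x)$ equals the common value or not. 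If $\sigma(s) = 0$, then $t$ is undefined at $\sigma$, and since $p$ and $q$ are division-free (hence defined), Assertion~\eqref{eq:0} gives $\sigma(q\cdot s) = 0$, so $\tfrac{p\cdot s}{q\cdot s}$ is undefined at $\sigma$; both sides of the identity are then undefined, matching clause~(3).

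Since the three clauses exhaust the cases of the semantic definition and the two cases above cover all valuations, the $\EqLpm$-identity holds in every partial meadow.

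The main subtlety I expect is the bookkeeping around the undefined case: one must ensure that multiplying numerator and denominator by $s$ faithfully transfers the undefinedness of $t$ onto $\tfrac{p\cdot s}{q\cdot s}$, rather than producing something spuriously defined. This is precisely why $s$ must come from Theorem~\ref{CFTF4PM} (so that $\sigma(s) = 0$ captures exactly the valuations where $t$ is undefined) and why $p$ and $q$ must be division-free (so that definedness of $\tfrac{p\cdot s}{q\cdot s}$ depends solely on $q \cdot s \nep 0$, which is controlled by $s \nep 0$ via axiom~\eqref{pm11}).
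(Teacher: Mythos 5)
Your proposal is correct and follows essentially the same route as the paper: take $s$ and $\frac pq$ from Theorem~\ref{CFTF4PM}, split on $s\nep 0$ versus $s\eqp 0$, use $\frac ss\eqp 1$ to identify $\frac{p\cdot s}{q\cdot s}$ with $\frac pq\eqp t$ in the first case, and observe that both sides are undefined in the second. Your extra bookkeeping (deriving $q\cdot s\eqp 0$ from Assertion~\eqref{eq:0} and invoking axiom~\eqref{pm11} for $q\cdot s\nep 0$) merely makes explicit what the paper leaves implicit.
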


\begin{proof}
Given $t$, let $s$ and $r=\frac pq$ be as found in Theorem~\ref{CFTF4PM}.
Given a partial meadow $\smash{F^{pd}}$ and a valuation $\sigma$, two cases are distinguished:
\[\smash{F^{pd}},\sigma \models s\nep 0 \text{~and~} \smash{F^{pd}},\sigma \models s\eqp 0.\]
In the first case,  
$\smash{F^{pd}},\sigma \models t \eqp \frac{p}{q}$ and  $\smash{F^{pd}},\sigma \models \frac{s}{s}  \eqp 1$
so that $\smash{F^{pd}},\sigma \models t \eqp \frac{p \cdot s}{q \cdot s}$, in the second case, both $t$ and 
$\frac{p \cdot s}{q \cdot s}$ are undefined in $\smash{F^{pd}}$ under valuation $\sigma$.
\end{proof}

In $\EqLpm$, the equation 
$(\frac 10 \eqp \frac 10) = (\frac 10 \nep \frac 10)$ expresses that division is not total.
However, 
this cannot be expressed in $\Lsfolpm$:

\begin{prn}
\label{prop:3.3.5} 
It is impossible to express in \textup{FTCpm}, thus by formulae in $\Lsfolpm$, 
that division is not total.
\end{prn}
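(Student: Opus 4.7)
The plan is to argue by contradiction. Suppose some $\phi \in \Lsfolpm$ expresses non-totality of division, meaning $\phi$ holds in every partial meadow while failing in every total $\Sigma_m^{pd}$-algebra that still models \FTCpm. Take any partial meadow $F^{pd}$ (for concreteness $\rat^{pd}$) and construct a total expansion $B$ over the same carrier $|B|=|F^{pd}|$ by setting $\frac{a}{0}^B := 0$ for every $a$, leaving all remaining operations and all division with nonzero denominator unchanged. I would first verify that $B \models \FTCpm$ directly: axioms \eqref{pm1a}--\eqref{pm7a} and \eqref{pm10a} are unaffected, while \eqref{pm8a}, \eqref{pm9a}, \eqref{pm11a} are short-circuit implications whose antecedents evaluate to $\fa$ whenever a distinguished denominator is $0$ (rendering those instances trivially true) and otherwise reduce to corresponding instances already valid in $F^{pd}$.

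The core of the proof is a transfer lemma established by simultaneous induction on $\psi \in \Lsfolpm$: for every valuation $\sigma$ into the common carrier,
\[
F^{pd},\sigma \models \psi \;\Longrightarrow\; B,\sigma \models \psi
\quad\text{and}\quad
F^{pd},\sigma \notmodels \psi \;\Longrightarrow\; B,\sigma \notmodels \psi.
\]
The atomic case rests on the term-level observation (proved by induction on $\Sigpm$-terms) that whenever $\sigma(t)$ is defined in $F^{pd}$ it is defined in $B$ with the same value: the only partial operation is division, whose $F^{pd}$-domain is included in its $B$-domain with matching values, and total operations are handled routinely. Hence ``both defined and equal'' and ``both defined and distinct'' transfer, covering $\models t\eqp r$ and $\notmodels t\eqp r$. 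The Boolean cases follow directly from the inductive clauses for $\models$ and $\notmodels$ in Section~\ref{sec:2.2}. For the quantifiers one uses $|F^{pd}|=|B|$; both the positive clause for $\existsp x.\psi$ (obtained from $\forallp$ via axiom~\eqref{e8}) and the denial clause for $\forallp x.\psi$ contain a definedness subcondition $\sigma[b/x] \models \psi \leftor \neg\psi$, to which the inductive hypothesis applies verbatim.

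Applying the lemma to the hypothesised $\phi$ yields $F^{pd}\models \phi \Longrightarrow B\models \phi$. Since $F^{pd}$ is a partial meadow we have $F^{pd}\models \phi$ by assumption, hence $B\models \phi$; yet $B$ has total division, so $B\not\models \phi$ by the same assumption, a contradiction. The main obstacle will be the asymmetry between $B$ and $F^{pd}$ in the transfer lemma: $B$ has strictly more defined terms, so one might fear that the extra definedness spoils the induction. What makes the argument go through is that both the satisfaction and the denial clauses for atomic $t \eqp r$ require $\sigma(t)$ and $\sigma(r)$ to be defined in the model, so new definedness in $B$ never flips the satisfaction status of a formula already evaluated in $F^{pd}$; only the ``easy direction'' of transfer is ever required.
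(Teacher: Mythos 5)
Your proposal is correct and follows essentially the same route as the paper: totalise division via the Suppes--Ono convention $\frac{x}{0}=0$ and prove by simultaneous induction on formula structure that both satisfaction and denial satisfaction transfer from $F^{pd}$ to the totalised algebra, so no $\Lsfolpm$-formula can separate the two. Your additional observations (verifying the totalised structure still models $\FTCpm$, and noting that the definedness subconditions in the quantifier clauses only ever require the easy direction of transfer) merely flesh out the paper's proof sketch.
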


\begin{proof} Suppose that  $F^{pd}\models \phi$ for some partial meadow $F^{pd}$, then we may totalise division 
in accordance with the Suppes-Ono convention $\frac  x0 = 0$, thereby obtaining $\mathsf{Tot}_0(F^{pd})$. 
Now we claim that $\mathsf{Tot}_0(F^{pd}) \models \phi$. 
With simultaneous induction on the structure of $\phi$ one easily proves that: 
(i) if
 $F^{pd} \models \phi$ then  $\mathsf{Tot}_0(F^{pd}) \models \phi$ and 
(ii)   if
 $F^{pd} \models \neg \phi$ then  $\mathsf{Tot}_0(F^{pd}) \models \neg \phi$. 
It follows that no formula $\phi$ can distinguish between $F^{pd}$ and  $\mathsf{Tot}_0(F^{pd})$ 
by being true for $F^{pd}$ and not for $\mathsf{Tot}_0(F^{pd})$.
\end{proof}
Two assertions in $\Lsfolpm$ that involve $\existsp$  
and hold in all partial meadows are these:
\begin{align}
\label{eq:Jan1}
x \nep 0 &\leftimp \existsp y. ( x\cdot y \eqp 1),
\\
\label{eq:Jan2}
x \nep 0 &\leftimp \existsp y. (y \nep 0 \leftand x  \eqp \frac 1 y).
\end{align}
Note that \eqref{eq:Jan1} is related to axiom~\eqref{pm11} and that in~\eqref{eq:Jan2}, 
the conjunct $y \nep 0$ cannot be omitted.

In the case of $F^{pd} = \rat^{pd}$, the partial meadow of rationals, we find a computable partial algebra.
A  specification of the abstract partial data type of rationals is given in Table~\ref{FTCpmRats}, 
where the notation $x^2$ in equation~\eqref{4sq} is an abbreviation for $x\cdot x$.

\begin{prn} 
\label{prop:rat}
The axioms in Table~\ref{FTCpmRats} are satisfied in $\rat^{pd}$ and prove each closed equation and inequation
that is true in $\rat^{pd}$.
\end{prn}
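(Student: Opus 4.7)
The plan is to establish soundness by direct inspection and completeness by a normal-form argument that reduces every closed equation/inequation to a ground integer calculation.

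\textbf{Soundness.} Each axiom of Table~\ref{FTCpmRats} extends $\FTCpm$, which is already sound in every partial meadow by Theorem~\ref{thm:completeness}. The additional rational-specific axioms, in particular \eqref{4sq}, are verified directly in $\rat^{pd}$ using Lagrange's four-square theorem and its standard consequences about sums of squares of rationals.

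\textbf{Completeness.} I would proceed in three steps.

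\emph{Step 1 (Reduction to flat fracterms over integer canonical forms).} For any closed term $t$, Theorem~\ref{CFTF4PM} supplies a closed division-free term $s$ and a closed flat fracterm $\frac{p}{q}$ with $s\nep 0 \leftimp t\eqp \frac{p}{q}$ provable in $\FTCpm$, while $s\eqp 0 \leftand t\eqp t$ is refuted in every partial meadow. The ring axioms \eqref{pm1}--\eqref{pm7} then reduce every closed division-free term provably to a \emph{canonical integer term}, namely $0$, or $1+\cdots+1$ ($n$ times), or the additive inverse thereof, for a unique $n\geq 1$.

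\emph{Step 2 (Distinguishing integers).} The key lemma is: for every nonzero integer $n$, the canonical integer term representing $n$ is provably $\nep 0$. The axiom \eqref{4sq} is designed precisely to deliver this; in the presence of the field axioms, an assertion of the form $1+\cdots+1\eqp 0$ is incompatible with the expressibility guaranteed by \eqref{4sq}, so the characteristic-zero property of $\rat$ becomes derivable.

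\emph{Step 3 (From canonical form to the verdict).} Given a closed equation $t\eqp r$ that holds in $\rat^{pd}$, both sides are defined, so by Steps~1 and 2 the witnesses $s_t,s_r$ are derivably nonzero and $t,r$ are provably equal to flat fracterms $\frac{p_t}{q_t},\frac{p_r}{q_r}$ with numerator and denominator reducible to integer canonical forms. Using Assertion~\eqref{eq:prod} and Lemma~\ref{la:nuttig}, the equation reduces to $p_t\cdot q_r\eqp p_r\cdot q_t$ between closed division-free terms; by Step~1 both sides reduce to the same canonical integer term, and the equation follows. For a closed inequation $t\nep r$ holding in $\rat^{pd}$, the same reduction yields an inequation between closed division-free terms whose ``cross-difference'' is a nonzero canonical integer, which Step~2 certifies.

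\textbf{Main obstacle.} Step~2 is the crux. Nothing in $\FTCpm$ alone forces characteristic zero: the partial meadow obtained from a finite field of prime characteristic $p$ validates $\FTCpm$ while satisfying $1+\cdots+1\eqp 0$ ($p$ times). So the proof must invoke \eqref{4sq} in an essential way to separate $\rat^{pd}$ from such positive-characteristic models, by converting a hypothetical equation $1+\cdots+1\eqp 0$ into a sum-of-squares identity that \eqref{4sq} forbids. Making this conversion uniform in $n$ is the real work of the argument.
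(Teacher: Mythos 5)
The paper states this proposition without proof, so there is no in-paper argument to compare against; judged on its own terms, your proposal has the right architecture (soundness by inspection; completeness via conditional flattening, reduction of closed division-free terms to canonical numerals, and axiom~\eqref{4sq} to certify that nonzero numerals are provably $\nep 0$), which is the classical route of Bergstra and Tucker for the rationals as an abstract data type. But you leave the step you yourself call the crux unfinished, and the missing piece is exactly the theorem you cite in the wrong place. Lagrange's four-square theorem plays no role in soundness: validity of \eqref{4sq} in $\rat^{pd}$ needs only that a sum of squares of rationals is nonnegative. Where it is needed is Step~2: for every integer $n\ge 1$ one has $n-1=a^2+b^2+c^2+d^2$ for integers $a,b,c,d$, so the numeral $\underline{n}$ is provably equal, by the ring axioms, to $1+((\underline{a}^2+\underline{b}^2)+(\underline{c}^2+\underline{d}^2))$, which is literally an instance of the left-hand side of \eqref{4sq}; replacement by (i1) then yields $\underline{n}\nep 0$, and Assertion~\eqref{eq:-x} handles negative $n$. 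There is no need to argue by ``incompatibility'' with a hypothetical $1+\cdots+1\eqp 0$; the derivation is direct and uniform in $n$, and the uniformity is precisely Lagrange's theorem.

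A second, unacknowledged gap sits in Step~3. After reducing to numerals you need, for distinct integers $m\ne k$ with neither equal to $0$, that $\underline{m}\nep\underline{k}$ is derivable, not merely that $\underline{m}-\underline{k}\nep 0$ is. Passing from the latter to the former requires a cancellation implication of the form $(x-y\nep 0)\leftimp x\nep y$, and this does not follow from the congruence-style machinery (p1)--(p7) and (i1) alone: rule (i1) only permits replacing a side of an inequation by a provably \emph{equal} term, so every inequation obtained that way traces back to a primitive one with $0$ on one side (namely (pm10), (pm11), \eqref{eq:-x}, \eqref{4sq}). Deriving the cancellation implication via (i5) in turn needs the disjunction $x\eqp y\leftor x\nep y$ for defined terms, which (a2) supplies only when one side is a constant of the signature. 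You must either derive this decidability disjunction for closed defined terms or isolate it as an additional lemma; without it Step~3 does not close for inequations. Note also that Theorem~\ref{CFTF4PM} is stated as validity in all partial meadows rather than as derivability, so your Step~1 claim of provability in $\FTCpm$ needs the reading of ``prove'' as semantic consequence (under which the cross-multiplication step is immediate) or else a separate check that the flattening implications are formally derivable.
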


\begin{table}[ht]
\caption{$\mathsf{\FTCpm/4sq}$: A specification of the partial meadow of rationals} 
\label{FTCpmRats}
\vspace{1.6mm}
\centering
\hrule
\begin{align}
\nonumber
\mathsf{import}~&\colon \FTCpm~(\mathrm{Table}~\ref{tab:FTCpm2})
\\
\nonumber
\mathsf{variables}~&\colon x,y,z,u \colon  \mathsf{Number} 
\\[3mm]
\label{4sq}
1 + ((x^2 + y^2) + (z^2 + u^2)) &\nep 0
\end{align}
\hrule
\end{table}

\section{$\bot$-Enlargements and consequence relations}
\label{sec:4}
In Sections~\ref{sec:4.1} and~\ref{sec:4.2} we define a notion of `enlargement'
in order to connect partial algebras and their logic to ordinary first order logic with an absorptive 
element $\bot$ that models partiality.
In Sections~\ref{sec:4.3}  
we provide alternative notions of computability for partial 
algebras, though limited to the case of minimal partial algebras. 
We will focus on minimal algebras only and then with much 
simpler definitions.

\subsection{$\bot$-Enlargement and its converse}
\label{sec:4.1}
Let the \emph{absorptive element} $\bot$ be a new constant symbol, intended to represent ``no proper value'' (i.e.\
$t \neq \bot$ corresponds to $t$ being defined) and let the set of first order 
conditional formulas $T_\bot$ contain the assertions that express that functions produce
$\bot$ on any series of arguments involving $\bot$. For instance, for a three place function $f$, 
$T_\bot$ contains 
\[
f(\bot,y,z) = \bot,\quad f(x,\bot,z) = \bot,\quad f(x,y,\bot)=\bot. 
\]

Given a partial algebra $A$, its $\bot$-enlargement $\mathsf{Enl}_\bot(A)$ is obtained by extending the 
domain with a new element, also denoted $\bot$, that serves as the interpretation of $\bot$.
The notation  $\mathsf{Enl}_\bot(A)$  is taken from~\cite{BergstraT2022JLAMP}.

In the opposite direction, given a total algebra $B$ that contains an absorptive element 
$\bot$, and such that $B\models \exists x. x \neq \bot$, the 
operation $\mathsf{Pdt}_\bot$ (partial data type) as introduced in~\cite{BergstraT2022JLAMP} creates a partial algebra 
$\mathsf{Pdt}_\bot(A)$ with $\bot$ removed from the 
domain and each operation which produces $\bot$ on some 
arguments made partial on these arguments. The name $\mathsf{Pdt}_\bot$ 
suggests that the resulting algebra is minimal, a requirement on all data types. 
Such was the intention in~\cite{BergstraT2022JLAMP}. 
We will use the same notation also in the more general case where the resulting structure 
need not be minimal. 

\begin{prn} 
\label{PdtEnl} If $\bot \notin \Sigma(A)$ and $\bot \notin |A|$ then 
$\mathsf{Pdt}_\bot(\mathsf{Enl}_\bot(A))= A$.
\end{prn}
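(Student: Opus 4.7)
The plan is to prove this by unwinding the definitions of $\mathsf{Enl}_\bot$ and $\mathsf{Pdt}_\bot$ and checking, symbol by symbol, that the two constructions compose to the identity on $A$ under the stated freshness hypotheses. Write $B = \mathsf{Enl}_\bot(A)$ and $A' = \mathsf{Pdt}_\bot(B)$; the task is to verify $A' = A$ by comparing signatures, carriers, and interpretations.

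For signatures and carriers the argument is immediate from the hypotheses. The enlargement adjoins a fresh constant symbol $\bot$ and a fresh domain element also called $\bot$, so $\Sigma(B) = \Sigma(A) \cup \{\bot\}$ and $|B| = |A| \cup \{\bot\}$ as disjoint unions, by virtue of $\bot \notin \Sigma(A)$ and $\bot \notin |A|$. The operation $\mathsf{Pdt}_\bot$ strips the constant $\bot$ from the signature and removes the element $\bot$ from the domain, yielding $\Sigma(A') = \Sigma(A)$ and $|A'| = |A|$.

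For interpretations, I would distinguish the constants from the $k$-ary function symbols. Each constant $c \in \Sigma(A)$ is assumed to have a value in $|A|$, so $\mathsf{Enl}_\bot$ leaves $c^B = c^A$ unchanged, and since this value lies in $|A|$ it survives the carrier restriction performed by $\mathsf{Pdt}_\bot$; hence $c^{A'} = c^A$. For a $k$-ary function symbol $f$, by definition of $\mathsf{Enl}_\bot$ the total interpretation $f^B$ agrees with $f^A$ on those tuples from $|A|^k$ where $f^A$ is defined, returns $\bot$ on those tuples from $|A|^k$ where $f^A$ is undefined, and returns $\bot$ on any tuple that contains $\bot$. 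Applying $\mathsf{Pdt}_\bot$ then restricts the admissible argument tuples to $|A|^k$ (because $\bot$ has been removed from the carrier) and makes $f$ undefined on exactly those tuples where $f^B$ returned $\bot$. Consequently $f^{A'}$ is undefined on precisely those tuples in $|A|^k$ where $f^A$ was undefined and agrees with $f^A$ elsewhere, so $f^{A'} = f^A$.

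The main obstacle is not mathematical but bookkeeping: one must observe explicitly that the ``absorptive on $\bot$'' clause of $\mathsf{Enl}_\bot$ becomes irrelevant after $\mathsf{Pdt}_\bot$, since the tuples in which that clause matters vanish together with $\bot$ from the carrier. The freshness hypotheses $\bot \notin \Sigma(A)$ and $\bot \notin |A|$ are precisely what rule out any collision that could prevent this round-trip from being the identity.
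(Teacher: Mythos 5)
Your proposal is correct; the paper states this proposition without any proof, treating it as immediate from the definitions of $\mathsf{Enl}_\bot$ and $\mathsf{Pdt}_\bot$, and your symbol-by-symbol verification of signatures, carriers, and interpretations is exactly the routine unwinding the authors implicitly have in mind. The only minor point you leave tacit is that $\mathsf{Pdt}_\bot$ is applicable to $\mathsf{Enl}_\bot(A)$ because $|A|$ is non-empty, so $\mathsf{Enl}_\bot(A)\models\exists x.\,x\neq\bot$; this is worth one clause but does not affect correctness.
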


\begin{prn} 
\label{EnlPdt} If $\bot \in \Sigma(A)$ and $\bot \in |A|$ with $\mathsf{card}(|A|)>1$, then 
\(\mathsf{Enl}_\bot(\mathsf{Pdt}_\bot(A))= A.\)
\end{prn}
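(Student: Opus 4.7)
The plan is to verify that $\mathsf{Enl}_\bot(\mathsf{Pdt}_\bot(A))$ and $A$ coincide on carrier and on the interpretation of every symbol of $\Sigma(A)$, by unwinding the two constructions directly. This is in the same spirit as Proposition~\ref{PdtEnl}, but with the roles of partial and total algebras reversed, so that one now has to track how $\mathsf{Pdt}_\bot$ first ``hides'' the absorptive behaviour of $\bot$ and how $\mathsf{Enl}_\bot$ then reinstates it.

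First I would handle the carrier. Since $\mathsf{card}(|A|)>1$ and $\bot\in|A|$, the set $|A|\setminus\{\bot\}$ is non-empty, so $\mathsf{Pdt}_\bot(A)$ is a well-defined partial algebra with carrier $|A|\setminus\{\bot\}$, and $\mathsf{Enl}_\bot$ adjoins the element $\bot$ back to yield carrier $(|A|\setminus\{\bot\})\cup\{\bot\}=|A|$. This is the only place the cardinality hypothesis is used: without it $\mathsf{Pdt}_\bot(A)$ would have empty carrier and $\mathsf{Enl}_\bot$ would not be applicable.

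Next I would verify the interpretation of each function symbol by case analysis. Fix a $k$-ary $f\in\Sigma(A)$ and a tuple $\bar a=(a_1,\ldots,a_k)\in|A|^k$. (i) If some $a_i=\bot$, then $f^A(\bar a)=\bot$ by the absorptive axioms in $T_\bot$ satisfied by $A$; on the other side, $\bar a$ lies outside the carrier of $\mathsf{Pdt}_\bot(A)$, so the defining clause of $\mathsf{Enl}_\bot$ for tuples involving the new element assigns $f(\bar a)=\bot$ as well. (ii) If $\bar a\in(|A|\setminus\{\bot\})^k$, split further according to whether $f^A(\bar a)=\bot$: if it is, $\mathsf{Pdt}_\bot$ makes $f$ undefined at $\bar a$ by construction and $\mathsf{Enl}_\bot$ then fills in $\bot$; if it is not, $\mathsf{Pdt}_\bot$ preserves the value and $\mathsf{Enl}_\bot$ leaves it untouched. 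Either way, the resulting value agrees with $f^A(\bar a)$. Constants are treated as the $0$-ary case: the symbol $\bot$ is interpreted as $\bot$ in both algebras, and any other constant $c$ is either preserved through both steps (if $c^A\neq\bot$) or handled exactly as in subcase (ii) (if $c^A=\bot$).

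The main obstacle is essentially bookkeeping rather than any single hard calculation: one must unpack the definitions of $\mathsf{Pdt}_\bot$ and $\mathsf{Enl}_\bot$ from~\cite{BergstraT2022JLAMP} precisely enough to see that the partialisation carried out by $\mathsf{Pdt}_\bot$ and the totalisation by a fresh $\bot$ carried out by $\mathsf{Enl}_\bot$ are exactly complementary to the absorptive behaviour of $\bot$ in $A$ encoded by $T_\bot$. A secondary subtlety, easily dealt with once isolated, is ensuring that the equality in the statement is genuine equality of algebras rather than mere isomorphism: one must identify the ``new'' element introduced by $\mathsf{Enl}_\bot$ with the original $\bot\in|A|$, which is legitimate because the construction names the adjoined element $\bot$.
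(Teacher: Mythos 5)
Your proposal is correct; the paper states this proposition without any proof, and your direct unwinding of the two constructions (carrier first, then a case analysis on each function symbol according to whether an argument is $\bot$ and whether the value is $\bot$) is exactly the routine argument the authors leave implicit. You also correctly identify the one point where the literal hypotheses under-specify the situation: the argument needs $\bot$ to be absorptive in $A$ (i.e.\ $A\models T_\bot$), which is part of the standing assumptions of Section~\ref{sec:4.1} on the algebras to which $\mathsf{Pdt}_\bot$ is applied, and without which the claim would fail.
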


\subsection{Reformulating the semantics of $\EqLsig$ in first order terms}
\label{sec:4.2}

Assuming $\bot \notin \Sigma$, let
$\Sigma_{\bot}= \Sigma \cup \{\bot\}$. A pair of transformations $\psi_{\mathsf{true}} $ and 
$\psi_{\mathsf{false}}$ translates formulae in $\mathsf{L_{sfol}}(\Sigma)$ 
to first order formulae over $\Sigma_{\bot}$, i.e. to $\mathsf{L_{fol}}(\Sigma_{\bot})$.
The transformation $\psi_{\mathsf{true}}$ translates formulae that are assumed to evaluate to \true,
and $\psi_{\mathsf{false}}$ is used as an auxiliary operator to deal with negation (i.e. 
$\psi_{\mathsf{true}}(\neg\phi) \equiv \psi_{\mathsf{false}}(\phi)$): 
\begin{enumerate}[ (1)]
\setlength\itemsep{0.48mm}

\item 
$\psi_{\mathsf{true}}(\tr)\equiv\tr$  and $\psi_{\mathsf{true}}(\fa)\equiv\fa$,
\item
$\psi_{\mathsf{false}}(\tr)\equiv\fa$ and $\psi_{\mathsf{false}}(\fa)\equiv\tr$,
\item 
$\psi_{\mathsf{true}}(t \eqp r) \equiv t \neq \bot \wedge r \neq \bot \wedge t = r$
(where $x\ne y$ abbreviates $\neg(x=y)$,
\item 
$\psi_{\mathsf{false}}(t \eqp r) \equiv t = \bot \vee r = \bot \vee t \ne r$,
\item 
$\psi_{\mathsf{true}}(\neg\phi) \equiv \psi_{\mathsf{false}}  (\phi)$~
(hence, $\psi_{\mathsf{true}}  (t \nep  r) \equiv \psi_{\mathsf{false}}(t \eqp r)$),
\item 
$\psi_{\mathsf{false}}(\neg\phi) \equiv \psi_{\mathsf{true}}  (\phi)$~ 
(hence, $\psi_{\mathsf{false}}(t \nep  r) \equiv \psi_{\mathsf{true}}(t \eqp r)$),
\item 
$\psi_{\mathsf{true}}(\phi_1 \leftor \phi_2) \equiv \psi_{\mathsf{true}}(\phi_1) \vee
	(\psi_{\mathsf{false}}(\phi_1)  \wedge \psi_{\mathsf{true}}(\phi_2) ) $,	
\item $\psi_{\mathsf{false}}(\phi_1 \leftor \phi_2) \equiv \psi_{\mathsf{false}}(\phi_1) \wedge
	\psi_{\mathsf{false}}(\phi_2)$,
\item $\psi_{\mathsf{true}}(\forallp x.\phi)\equiv \forall x.(x\ne\bot\to\psi_{\mathsf{true}}(\phi))$,
\item $\psi_{\mathsf{false}}(\forallp x.\phi)\equiv \exists x.(x\ne\bot\wedge\psi_{\mathsf{false}}(\phi))
    \wedge \forall x.\psi_{\mathsf{true}}(\phi\leftor\neg\phi)$.

\end{enumerate}
It follows easily that 
\begin{align*}
&\psi_{\mathsf{true}}(\phi_1 \leftand \phi_2)
\equiv
\psi_{\mathsf{true}}(\phi_1) \wedge \psi_{\mathsf{true}}(\phi_2),
\\
&\psi_{\mathsf{false}}(\phi_1 \leftand \phi_2)
\equiv
\psi_{\mathsf{false}}(\phi_1) \vee(\psi_{\mathsf{true}}(\phi_1)\wedge \psi_{\mathsf{false}}(\phi_2)),
\\[2mm]
&\psi_{\mathsf{true}}(\existsp x.\phi) \equiv 
\exists x. (x \neq \bot \wedge \psi_{\mathsf{true}}(\phi))\wedge \forall x.\psi_{\mathsf{true}}(\phi\leftor\neg\phi),
\\
&\psi_{\mathsf{false}}(\existsp x.\phi) \equiv \forall x. (x \neq \bot \to 
\psi_{\mathsf{false}}(\phi)).
\end{align*}
In order to formulate key properties of the operator $\psi_{\mathsf{true}}$, 
a consequence relation must be chosen.

\subsection{Consequence relations}
\label{sec:4.3}
Various consequence relations can be contemplated in the context of 3-valued logics. 
For an extensive discussion of these options we refer to~\cite{KTB88}.
We will consider the so-called \emph{strong validity} consequence relation, notation $\models_{\mathsf{ss}}$,
where both in the assumptions 
and in the conclusion the formulae (sentences)  are considered ``true'' if these are valid under all valuations. 
Alternatively one may consider $\models_{\mathsf{sw}}$, $\models_{\mathsf{ws}}$, and $\models_{\mathsf{ww}}$
where $w$ indicates weak validity, that is for no valuation a formula is false. 
We propose that in the case of elementary arithmetic the use of $\models_{\mathsf{ss}}$ is preferable to 
the three alternatives just mentioned. Below we will write $\models_\Sigma$ instead of 
$\models_{\mathsf{ss}}$ in order to highlight the role and relevance of the signature involved.

\begin{dfn}
Define $\phi_1,...,\phi_n \models_\Sigma  \phi$ if, and only if, for each $\Sigma$-structure  $A$: 
if for all valuations $\sigma$ into $|A|$ it is the case that  
$A,\sigma \models_\Sigma \phi_1$, ...,  $A,\sigma \models_\Sigma \phi_n$, then for all valuations $\sigma$  
into $|A|$, $A,\sigma \models_\Sigma  \phi$.
\end{dfn}

A connection between the various satisfaction relations and the transformations $\psi_\mathsf{true}$ and
$\psi_\mathsf{false}$ is found under some restrictions.

\begin{prn} 
\label{correspondence:1}
For any partial $\Sigma$-algebra $A$, for each $\mathsf{L_{sfol}}(\Sigma)$ formula $\phi$  
and for each valuation $\sigma$ taking values in $|A|$: 
$A ,\sigma \models_\Sigma \phi $ if, and only if, 
\(\mathsf{Enl}_\bot(A),\sigma \models \psi_{\mathsf{true}} (\phi).\)
\end{prn}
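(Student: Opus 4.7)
The proposition asks for soundness and completeness of the translation $\psi_{\mathsf{true}}$ with respect to the three-valued semantics. The right way to attack this is by \emph{simultaneous} induction on the structure of $\phi$, establishing the stronger paired claim
\[
A,\sigma \models_\Sigma \phi \iff \mathsf{Enl}_\bot(A),\sigma \models \psi_{\mathsf{true}}(\phi)
\qquad\text{and}\qquad
A,\sigma \notmodels_\Sigma \phi \iff \mathsf{Enl}_\bot(A),\sigma \models \psi_{\mathsf{false}}(\phi).
\]
The paired formulation is essential because the clauses for $\neg$ swap $\psi_{\mathsf{true}}$ and $\psi_{\mathsf{false}}$, and the inductive step for $\leftor$ consumes both halves at once.

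Before the main induction, I would record a small preliminary lemma on term evaluation: for every $\Sigma$-term $t$ and valuation $\sigma$ into $|A|$, the value $\sigma(t)$ is defined in $A$ if and only if the evaluation of $t$ in $\mathsf{Enl}_\bot(A)$ under $\sigma$ differs from $\bot$, and in that case the two values coincide. This is a routine induction on $t$ using the axioms in $T_\bot$ that force $f$ to return $\bot$ whenever any argument is $\bot$: undefinedness in $A$ arises exactly when some argument of a partial operation lies outside its domain of definition, which in $\mathsf{Enl}_\bot(A)$ manifests as $\bot$ propagation.

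With this lemma, the base cases are immediate. For $\tr,\fa$ the clauses are transparent. For $t\eqp r$, the lemma converts ``both $\sigma(t),\sigma(r)$ defined and equal'' into ``$t\neq\bot \wedge r\neq\bot \wedge t=r$'' in $\mathsf{Enl}_\bot(A)$, matching $\psi_{\mathsf{true}}(t\eqp r)$; the $\notmodels$ side matches $\psi_{\mathsf{false}}(t\eqp r)$ analogously. The inductive step for $\neg\phi$ is purely symbolic: the definitions of $\models\neg\phi,\notmodels\neg\phi$ swap the IH halves exactly as $\psi_{\mathsf{true}}(\neg\phi)\equiv\psi_{\mathsf{false}}(\phi)$ and $\psi_{\mathsf{false}}(\neg\phi)\equiv\psi_{\mathsf{true}}(\phi)$ prescribe. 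The step for $\phi_1\leftor\phi_2$ follows by unwinding clause~(7) against the definition of $\models\phi_1\leftor\phi_2$ (case (i) gives $\psi_{\mathsf{true}}(\phi_1)$, case (ii) gives $\psi_{\mathsf{false}}(\phi_1)\wedge\psi_{\mathsf{true}}(\phi_2)$), and dually for $\notmodels$.

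The main obstacle, as expected, is the pair of clauses for $\forallp x.\phi$. The $\models$ direction is easy: $\mathsf{Enl}_\bot(A)$ has domain $|A|\cup\{\bot\}$, so $\forall x.(x\neq\bot\to\psi_{\mathsf{true}}(\phi))$ effectively quantifies over $|A|$, and the IH turns it into ``for all $a\in|A|$, $A,\sigma[a/x]\models\phi$''. The $\notmodels$ direction, however, must match clause~(10) of the definition on page~\pageref{complemented}, whose second conjunct demands that $\phi$ not be \emph{undefined} at any witness. This is precisely where one must bring in the third truth value: I would introduce a third inductive claim alongside the two above, namely
\[
A,\sigma \unmodels \phi \iff \mathsf{Enl}_\bot(A),\sigma \not\models \psi_{\mathsf{true}}(\phi)\wedge \mathsf{Enl}_\bot(A),\sigma\not\models\psi_{\mathsf{false}}(\phi),
\]
equivalently, $\psi_{\mathsf{true}}(\phi\leftor\neg\phi)$ captures definedness of $\phi$. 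Checking this undefinedness clause inductively against the rules on page after $\unmodels$ is straightforward for atoms, $\neg$, and $\leftor$; propagating it through $\forallp$ then makes the $\forall x.\psi_{\mathsf{true}}(\phi\leftor\neg\phi)$ conjunct of $\psi_{\mathsf{false}}(\forallp x.\phi)$ correspond exactly to the ``for all $b\in|A|,\,A,\sigma[b/x]\models\phi\leftor\neg\phi$'' requirement. Once the triple induction closes, the proposition follows as the $\psi_{\mathsf{true}}$-half at the root.
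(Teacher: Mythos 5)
Your overall strategy is the right one and is surely what the paper's one-line ``straightforward by induction'' proof intends: a simultaneous induction pairing $\models$ with $\psi_{\mathsf{true}}$ and $\notmodels$ with $\psi_{\mathsf{false}}$, reinforced by a third clause for $\unmodels$, and preceded by the term-evaluation lemma ($\sigma(t)$ is defined in $A$ iff $t$ evaluates to something other than $\bot$ in $\mathsf{Enl}_\bot(A)$, with matching values). The paper gives no details, so in structure you are doing exactly what it asks for, only more explicitly.

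However, the two places you dismiss as ``analogous'' or ``exact'' are precisely where the induction fails to close against the translation as printed. Your invariant $A,\sigma\notmodels\phi\iff\mathsf{Enl}_\bot(A),\sigma\models\psi_{\mathsf{false}}(\phi)$ is false at the atoms: clause~(4) defines $\psi_{\mathsf{false}}(t\eqp r)$ as the \emph{disjunction} $t=\bot\vee r=\bot\vee t\ne r$, which is satisfied as soon as one term is undefined, whereas $A,\sigma\notmodels t\eqp r$ requires both terms to be \emph{defined} and different. Concretely, for $\phi\equiv\frac10\eqp\frac10$ in a partial meadow one has $A,\sigma\unmodels\phi$ (so not $\notmodels$), yet $\mathsf{Enl}_\bot(A),\sigma\models\psi_{\mathsf{false}}(\phi)$ because $\frac10=\bot$ there; via $\psi_{\mathsf{true}}(\neg\phi)\equiv\psi_{\mathsf{false}}(\phi)$ this even gives a counterexample to the proposition itself for $\phi\equiv\frac10\nep\frac10$. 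The atomic case of your paired claim only works if $\psi_{\mathsf{false}}(t\eqp r)$ is read as the conjunction $t\ne\bot\wedge r\ne\bot\wedge t\ne r$. But with that repair your quantifier step breaks instead: the second conjunct of $\psi_{\mathsf{false}}(\forallp x.\phi)$ is the \emph{unguarded} $\forall x.\psi_{\mathsf{true}}(\phi\leftor\neg\phi)$, which is then also demanded at $x=\bot$, where (for, say, $\phi\equiv x\eqp 1$) neither $\psi_{\mathsf{true}}(\phi)$ nor the conjunctive $\psi_{\mathsf{false}}(\phi)$ holds --- so $\psi_{\mathsf{false}}(\forallp x.\:x\eqp 1)$ would never be satisfiable even though $A,\sigma\notmodels\forallp x.\:x\eqp 1$ can hold. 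A correct write-up must first calibrate the translation consistently (conjunctive clause for $\psi_{\mathsf{false}}$ on atoms, and an $x\ne\bot$ guard on the definedness conjunct in clause~(10)); only then does the triple induction you describe go through. As written, your assertions that the atomic and quantifier cases ``match exactly'' skip over the one non-routine point of the whole argument.
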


\begin{proof} Straightforward by induction on the structure of $\phi$.
\end{proof}

In the following, let $A$ be a partial $\Sigma$-algebra and $B$ a total $\Sigma$-algebra 
with $\bot\in\Sigma(B)$.

\begin{prn} 
\label{correspondence:2}
$A \models_\Sigma \phi $ if, and only if,  $\mathsf{Enl}_\bot(A) \models \psi_{\mathsf{true}} (\phi)$.
\end{prn}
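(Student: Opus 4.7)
The plan is to reduce Proposition~\ref{correspondence:2} to its pointwise counterpart, Proposition~\ref{correspondence:1}, by unfolding the definition of $\models$ on both sides.

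First, I unfold the left-hand side: by the definition of strong validity $\models_\Sigma = \models_{\mathsf{ss}}$ adopted in Section~\ref{sec:4.3}, $A \models_\Sigma \phi$ is by convention the universal closure ``$A,\sigma \models_\Sigma \phi$ for every valuation $\sigma$ taking values in $|A|$''. Similarly, $\mathsf{Enl}_\bot(A) \models \psi_{\mathsf{true}}(\phi)$ unfolds to ``$\mathsf{Enl}_\bot(A),\sigma \models \psi_{\mathsf{true}}(\phi)$ for every valuation $\sigma$''. Next, I would apply Proposition~\ref{correspondence:1} pointwise: for each fixed valuation $\sigma$ into $|A|$, we have $A,\sigma \models_\Sigma \phi$ iff $\mathsf{Enl}_\bot(A),\sigma \models \psi_{\mathsf{true}}(\phi)$. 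Quantifying over all $\sigma$ on both sides yields the two-sided equivalence claimed.

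The main obstacle is the bookkeeping regarding the range of valuations: the domain of $\mathsf{Enl}_\bot(A)$ is $|A|\cup\{\bot\}$, whereas in Proposition~\ref{correspondence:1} the valuations take values in $|A|$ only. To close this gap, I would show, by a short induction on the structure of $\phi$, that the $\mathsf{L_{fol}}(\Sigma_\bot)$-formula $\psi_{\mathsf{true}}(\phi)$ is \emph{$\bot$-guarded} on the free variables of $\phi$: whenever a valuation $\tau$ into $|A|\cup\{\bot\}$ assigns $\bot$ to a free variable that occurs in an atomic subformula $t\eqp r$ of $\phi$, the clause $t\neq\bot\wedge r\neq\bot$ in $\psi_{\mathsf{true}}(t\eqp r)$ fails, so that assignment already produces the intended behaviour; in the remaining parse-positions of $\phi$ the free variable is irrelevant to the semantic value. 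The induction is routine thanks to the design of the clauses for $\psi_{\mathsf{true}}$ and $\psi_{\mathsf{false}}$ on $\neg$, $\leftor$, $\leftand$ and the quantifiers (where bound variables are explicitly guarded by $x\neq\bot$). This reduces quantification over valuations into $|A|\cup\{\bot\}$ to quantification over valuations into $|A|$, and the chain of equivalences outlined above then closes.
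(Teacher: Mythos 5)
Your reduction --- unfold both sides as universal quantification over valuations and apply Proposition~\ref{correspondence:1} pointwise --- is exactly the paper's proof, which consists of the single sentence ``Immediate using Proposition~\ref{correspondence:1}.'' Where you go beyond the paper is in trying to discharge the mismatch between valuations into $|A|$ (the hypothesis of Proposition~\ref{correspondence:1}) and valuations into $|A|\cup\{\bot\}$ (the right-hand $\models$), and this is where your argument breaks. Your ``$\bot$-guardedness'' observation cuts the wrong way: if a valuation $\tau$ sends a free variable of $\phi$ to $\bot$, the failure of the guard $t\ne\bot\wedge r\ne\bot$ makes $\psi_{\mathsf{true}}(t\eqp r)$ \emph{false} at $\tau$; it does not render $\tau$ harmless. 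Concretely, take $\phi\equiv x\eqp x$. Then $A\models_\Sigma\phi$, yet $\psi_{\mathsf{true}}(\phi)\equiv (x\ne\bot\wedge x\ne\bot\wedge x=x)$ fails in $\mathsf{Enl}_\bot(A)$ under $\tau(x)=\bot$, so under the standard reading of $\models$ (truth under all valuations into the carrier of $\mathsf{Enl}_\bot(A)$) the left-to-right implication is false for open $\phi$. No induction on the structure of $\phi$ can repair this.

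The gap is closed by a restriction, not by a lemma: either take $\phi$ to be a sentence, so that valuations are irrelevant and the proposition really is immediate from Proposition~\ref{correspondence:1} --- which is how it is used in Proposition~\ref{correspondence:4} and in Theorem~\ref{thm:pmeqcm}, where $\psi_{\mathsf{true}}$ is only applied to the explicitly quantified axioms of \FTCpmc\ --- or stipulate that satisfaction of $\psi_{\mathsf{true}}(\phi)$ in $\mathsf{Enl}_\bot(A)$ is evaluated only at $\bot$-avoiding valuations (equivalently, pass to $\psi_{\mathsf{true}}$ of the universal closure of $\phi$, whose quantifiers are relativised to $x\ne\bot$ by clause (9) of the translation). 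Under either reading the pointwise application of Proposition~\ref{correspondence:1} already suffices, and your extra induction is neither needed nor true as stated.
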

\begin{proof} Immediate using Proposition~\ref{correspondence:1}.
\end{proof}

\begin{prn} 
\label{correspondence:3}
Let $c \in \Sigma( B)$, and assume that $ B \models c \neq \bot$.
Then $\mathsf{Pdt}_\bot( B) \models_\Sigma \phi $ if, and only if, $B \models \psi_{\mathsf{true}} (\phi)$.
\end{prn}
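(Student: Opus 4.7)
The plan is to derive Proposition~\ref{correspondence:3} as an immediate corollary of Proposition~\ref{correspondence:2} combined with the identity $\mathsf{Enl}_\bot(\mathsf{Pdt}_\bot(B)) = B$ supplied by Proposition~\ref{EnlPdt}. Since $\mathsf{Pdt}_\bot(B)$ is by construction a partial $\Sigma$-algebra (with $\bot$ stripped from both domain and signature), instantiating Proposition~\ref{correspondence:2} at $A := \mathsf{Pdt}_\bot(B)$ yields
\[
\mathsf{Pdt}_\bot(B) \models_\Sigma \phi \iff \mathsf{Enl}_\bot(\mathsf{Pdt}_\bot(B)) \models \psi_{\mathsf{true}}(\phi).
\]
It therefore suffices to identify $\mathsf{Enl}_\bot(\mathsf{Pdt}_\bot(B))$ with $B$ and then substitute.

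To invoke Proposition~\ref{EnlPdt} for this identification, I need to verify its three hypotheses: $\bot \in \Sigma(B)$, $\bot \in |B|$, and $\mathsf{card}(|B|) > 1$. The first two are part of the standing setup of Section~\ref{sec:4.3}, in which $B$ is a total $\Sigma$-algebra carrying an absorptive element $\bot$ in both signature and carrier. The cardinality condition is precisely where the assumption $B \models c \neq \bot$ enters: since $c \in \Sigma(B)$ is a constant symbol, its interpretation is an element of $|B|$ which by hypothesis differs from $\bot$, so $|B|$ contains at least two distinct elements. Proposition~\ref{EnlPdt} then delivers $\mathsf{Enl}_\bot(\mathsf{Pdt}_\bot(B)) = B$, and substituting this into the displayed equivalence closes the argument.

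I do not anticipate any genuine obstacle; the proof is essentially a one-line chain of two previously established propositions. The only point worth recording explicitly is that the hypothesis $B \models c \neq \bot$ does double duty: it furnishes the cardinality requirement needed for Proposition~\ref{EnlPdt}, and it simultaneously ensures that $\mathsf{Pdt}_\bot(B)$ has a non-empty carrier, so that the consequence relation $\models_\Sigma$ on the left-hand side of the stated equivalence is meaningfully defined in the first place.
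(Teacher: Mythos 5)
Your proposal is correct; the paper itself states Proposition~\ref{correspondence:3} without proof, and your argument---instantiate Proposition~\ref{correspondence:2} at $A:=\mathsf{Pdt}_\bot(B)$ and then identify $\mathsf{Enl}_\bot(\mathsf{Pdt}_\bot(B))$ with $B$ via Proposition~\ref{EnlPdt}---is exactly the chain the authors use in their proof of the neighbouring Proposition~\ref{correspondence:4}. Your observation that the hypothesis $B\models c\neq\bot$ supplies both the cardinality condition of Proposition~\ref{EnlPdt} and the non-emptiness of $|\mathsf{Pdt}_\bot(B)|$ is the only point needing explicit mention, and you have it.
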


\begin{prn} 
\label{correspondence:4}
$\phi_1, ...,\phi_n \models_\Sigma  \phi$ if, and only if,
\[
T_\bot \cup \{\psi_\mathsf{true} (\phi_1), ..., \psi_\mathsf{true} (\phi_n)\} \cup 
\{\exists x.x \neq \bot\} \models \psi_\mathsf{true} (\phi).
\]
\end{prn}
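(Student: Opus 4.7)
The plan is to establish the two directions separately, using the previous Propositions~\ref{PdtEnl}--\ref{correspondence:3} to move between partial $\Sigma$-algebras and total $\Sigma_\bot$-algebras satisfying $T_\bot \cup \{\exists x.\: x\neq \bot\}$. The key observation is that $\mathsf{Enl}_\bot$ and $\mathsf{Pdt}_\bot$ are mutually inverse between these two classes, and that, by Proposition~\ref{correspondence:2}, satisfaction of $\phi$ in a partial algebra $A$ corresponds exactly to satisfaction of $\psi_{\mathsf{true}}(\phi)$ in $\mathsf{Enl}_\bot(A)$.

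For the ($\Leftarrow$) direction, I would take an arbitrary partial $\Sigma$-algebra $A$ with $A \models_\Sigma \phi_i$ for each $i$ and set $B = \mathsf{Enl}_\bot(A)$. By the very construction of the $\bot$-enlargement, $B$ satisfies every sentence of $T_\bot$, and since partial algebras have non-empty domains, $B \models \exists x.\: x\neq \bot$. Proposition~\ref{correspondence:2} then yields $B \models \psi_{\mathsf{true}}(\phi_i)$ for each $i$, so the assumed first-order entailment gives $B \models \psi_{\mathsf{true}}(\phi)$, and a final appeal to Proposition~\ref{correspondence:2} returns $A \models_\Sigma \phi$.

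For the ($\Rightarrow$) direction, I would start with any $\Sigma_\bot$-structure $B$ satisfying $T_\bot$, all $\psi_{\mathsf{true}}(\phi_i)$, and $\exists x.\: x\neq\bot$. Put $A = \mathsf{Pdt}_\bot(B)$: because $\bot$ is absorptive in $B$ (ensured by $T_\bot$) and because $|B|$ contains an element distinct from $\bot$, the partial algebra $A$ is well defined and non-empty. Since $\mathsf{card}(|B|) > 1$ and $\bot \in \Sigma(B)\cap |B|$, Proposition~\ref{EnlPdt} gives $\mathsf{Enl}_\bot(A) = B$, so Proposition~\ref{correspondence:2} implies $A \models_\Sigma \phi_i \iff B \models \psi_{\mathsf{true}}(\phi_i)$. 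Under the hypothesis on $B$ this yields $A \models_\Sigma \phi_i$ for every $i$; applying the assumed entailment $\phi_1,\ldots,\phi_n \models_\Sigma \phi$ gives $A \models_\Sigma \phi$, and Proposition~\ref{correspondence:2} translates this back to $B \models \psi_{\mathsf{true}}(\phi)$, as required.

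The main obstacle is not in the equivalence itself, which runs smoothly once Proposition~\ref{correspondence:2} and Proposition~\ref{EnlPdt} are invoked in sequence, but in the bookkeeping that the translation $\psi_{\mathsf{true}}$ commutes with the change of domain induced by $\mathsf{Pdt}_\bot$ and $\mathsf{Enl}_\bot$; in particular, that the valuations used in $\models_\Sigma$ (which range over $|A|$) correspond precisely to those classical valuations of $\mathsf{Enl}_\bot(A)$ avoiding $\bot$, a point that $T_\bot$ together with $\exists x.\: x\neq\bot$ encodes faithfully on the first-order side.
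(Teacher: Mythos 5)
Your proposal is correct and follows essentially the same route as the paper's own proof: both directions pass through Proposition~\ref{correspondence:2}, with the forward direction using $\mathsf{Pdt}_\bot$ and Proposition~\ref{EnlPdt} to recover $B=\mathsf{Enl}_\bot(A)$, and the backward direction using $\mathsf{Enl}_\bot(A)\models T_\bot\cup\{\exists x.\,x\neq\bot\}$ to apply the assumed first-order entailment. No substantive difference from the paper's argument.
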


\begin{proof} For ``if'', assume that $A \models_\Sigma \phi_1, ..., A \models_\Sigma \phi_n$, then by 
Proposition~\ref{correspondence:2}, 
\[
\mathsf{Enl}_\bot(A) \models \psi_\mathsf{true} (\phi_1), ...,
\mathsf{Enl}_\bot(A) \models \psi_\mathsf{true} (\phi_n).\]
Because $A$ has a non-empty domain, $A \models_\Sigma \existsp x. x\neq \bot$. 
Because $\mathsf{Enl}_\bot(A) \models T_\bot$ 
it follows with $T_\bot \cup \{\psi_\mathsf{true} (\phi_1), ..., \psi_\mathsf{true} (\phi_n)\} \models 
\psi_\mathsf{true} (\phi)$ that $\mathsf{Enl}_\bot(A) \models \psi_\mathsf{true} (\phi)$. Now using 
Proposition~\ref{correspondence:2}, $A \models_\Sigma \phi$. 

For the other direction assume that 
$B \models T_\bot \cup \{\psi_\mathsf{true} (\phi_1), ..., \psi_\mathsf{true} (\phi_n)\} 
\cup \{\exists x.x \neq \bot \}$. Then $\mathsf{Pdt}_\bot(B)$ has a non-empty domain so 
that $A =\mathsf{Pdt}(B)$ is well-defined, and with Proposition~\ref{EnlPdt} $B = \mathsf{Enl}_\bot(A)$.
It follows with Proposition~\ref{correspondence:1} that $A \models_\Sigma \phi_1, ..., 
A \models_\Sigma \phi_n$ so that 
$A \models_\Sigma \phi$ from which one obtains $B \models \psi_{\mathsf{true}}(\phi)$ with 
Proposition~\ref{correspondence:2}.
\end{proof}

\begin{prn} 
\label{correspondence:5}
The consequence relation $\phi_1, ...,\phi_n \models_\Sigma  \phi$ is semi-com\-putable.
\end{prn}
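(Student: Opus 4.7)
The plan is to reduce the claim to the well-known semi-computability of classical first-order consequence via Proposition~\ref{correspondence:4}. First I would observe that the translation $\psi_{\mathsf{true}}$ is effective: its inductive clauses are straightforwardly computable on the syntax of $\mathsf{L_{sfol}}(\Sigma)$, and likewise the set $T_\bot$ of first-order axioms expressing the absorptive behaviour of $\bot$ on each function symbol of $\Sigma$ is recursively enumerable (taking $\Sigma$ itself to be r.e., as is implicit throughout the paper).

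Next, I would invoke Proposition~\ref{correspondence:4} to replace the target relation by the classical first-order consequence
\[
T_\bot \cup \{\psi_\mathsf{true}(\phi_1), \ldots, \psi_\mathsf{true}(\phi_n)\} \cup \{\exists x.\,x \neq \bot\} \models \psi_\mathsf{true}(\phi)
\]
over the signature $\Sigma_\bot$. Given a candidate instance $\phi_1,\dots,\phi_n,\phi$, the premise set on the left is r.e.\ (since $T_\bot$ is r.e.\ and the translated hypotheses are computable), and the conclusion $\psi_\mathsf{true}(\phi)$ is computable.

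By G\"odel's completeness theorem, classical first-order consequence from an r.e.\ set of premises is semi-computable: fix any sound and complete proof system, and a semi-decision procedure enumerates finite subsets of $T_\bot$ together with the finitely many translated hypotheses and the non-emptiness axiom, then enumerates formal proofs from each such finite premise set, halting as soon as $\psi_\mathsf{true}(\phi)$ is derived. Compactness guarantees that if the consequence holds, only a finite subset of $T_\bot$ is needed, so the procedure terminates on all positive instances.

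There is no real obstacle beyond bookkeeping: the argument is a direct combination of the translation result already proved (Proposition~\ref{correspondence:4}), effective computability of $\psi_{\mathsf{true}}$ and $T_\bot$, and the classical fact that first-order provability from an r.e.\ theory is r.e. The only point worth flagging is the tacit recursive enumerability of $\Sigma$; if one wishes to be fully explicit, one can state this as a hypothesis, but it is standard in this setting.
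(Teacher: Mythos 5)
Your proposal is correct and follows essentially the same route as the paper: reduce via Proposition~\ref{correspondence:4} to classical first-order consequence from a semi-computable theory over $\Sigma_\bot$, and then appeal to the standard fact that such consequence is semi-computable. The paper states this in one sentence; you merely spell out the standard details (effectiveness of $\psi_{\mathsf{true}}$, recursive enumerability of $T_\bot$, completeness and compactness), which is fine.
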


\begin{proof} From Proposition~\ref{correspondence:4} it follows that the consequence at hand is 
effectively 1-1 reducible to an instance of consequence from a semi-computable first order theory, 
which is known to be semi-computable.
\end{proof} 

\section{Fracterm calculus for common meadows}
\label{sec:5}

In Section~\ref{sec:5.1}, we recall \emph{common meadows} and a fracterm calculus for these, FTCcm.
In Section~\ref{sec:5.2}, we establish 
that the axiomatisation of FTCcm is equivalent to
the transformation $\psi_{\mathsf{true}}(\FTCpm)$ 
with \FTCpm\ as shown in Table~\ref{tab:FTCpm2b}. 

\subsection{FTCcm, a specification}
\label{sec:5.1}

Fracterm calculus for common meadows, FTCcm, 
starts by involving the absorptive element $\bot$ and by assuming $\frac x0 = \bot$.

Following~\cite{BergstraT2022CJ}, we adopt a modular approach and first consider $\mathsf{Enl}_\bot(R)$, 
the enlargement of a commutative unital ring $R$ with $\bot$, with axioms in Table~\ref{tab:wcr}.

\begin{table}[t]
\caption{Specification of commutative unital $\bot$-rings, with a set $E_{wcr,\bot}$ of axioms}
\label{tab:wcr}
\vspace{1.6mm}
\centering
\hrule
\begin{align}
\nonumber	
\mathsf{signature}~&\colon {\Sigma_{wcr,\bot}}=\{  
\\
\nonumber
\texttt{sort}~&\colon \mathsf{Number}
\\
\nonumber
\mathsf{constants}~&\colon 0,1,\bot\colon  \mathsf{Number}
\\
\nonumber
\mathsf{total~functions}~&\colon \_+\_\,,\_\cdot\_\,\colon 
\mathsf{Number} \times  \mathsf{Number} \to \mathsf{Number};
\\
\nonumber
&~~-\_ \colon \mathsf{Number} \to  \mathsf{Number}
\\
\nonumber
\mathsf{equality~relation}~&\colon \_  =\_ \subseteq  \mathsf{Number} \times \mathsf{Number} \}
\\
\nonumber
\mathsf{variables}~&\colon x,y,z \colon  \mathsf{Number}
\\[2mm]
\label{cm1}
\tag{c1}
(x+y)+z &= x + (y + z)
\\
\label{cm2}
\tag{c2}
x+y &= y+x
\\
\label{cm3}
\tag{c3}
x+0 &= x
\\
\label{cm4}
\tag{c4}
x + (-x) &= 0 \cdot x 
\\[2mm]
\label{cm5}
\tag{c5}
(x \cdot y) \cdot z	&= x \cdot (y \cdot z) 
\\
\label{cm6}
\tag{c6}
x \cdot y &= y \cdot x
\\
\label{cm7}
\tag{c7}
1 \cdot x &= x
\\
\label{cm8}
\tag{c8}
x \cdot (y+ z) 	
&= (x \cdot y) + (x \cdot z) 
\\[2mm]
\label{cm9}
\tag{c9}
-(-x) &= x
\\
\label{cm10}
\tag{c10}
x + \bot &= \bot
\\[2mm]
\label{eq:0xx=0x}
\tag{c11}
0 \cdot (x\cdot x) &= 0 \cdot x 
\end{align}
\hrule
\end{table}

Since $\bot$ is absorptive, it follows that
\(-\bot=\bot+x=\bot\cdot x=\bot,\)
so the familiar ring identities $0\cdot x=0$ and $x+(-x)=0$ are not valid, but the weaker identities
$0\cdot (x\cdot x)=0\cdot x$ and $x+(-x)=0\cdot x$ are.
In~\cite[Thm.2.1]{BergstraT2022CJ}, the set of axioms $E_{wcr,\bot}$ is defined and 
it is shown that for each equation $t=r$ over $\Sigma(R)\cup\{\bot\}$,
\[ 
E_{wcr,\bot}\vdash t=r\iff \mathsf{Enl}_\bot(R)\models t=r.
\]
For example, 
\[
E_{wcr,\bot}\vdash 0\cdot (x+y)=0\cdot (x\cdot y)
\]
(with $0\cdot (x+y) = 0\cdot ((x+y)\cdot(x+y))$ this follows easily).
With \emph{Mace4}~\cite{Prover9}, it quickly follows that the axioms of $E_{wcr}$ are logically independent.
 
\begin{table}[t]
\caption{Specification of the fracterm calculus of common meadows, with a set $\FTCcm$
\\
\phantom{Table~\ref{tab:cm}\hspace{1.4mm}}
 of axioms}
\label{tab:cm}
\vspace{1.6mm}
\centering
\hrule
\begin{align}
\nonumber
\textsf{import}~&\colon E_{wcr,\bot}\setminus\{\eqref{eq:0xx=0x}\}~(\mathrm{Table}~\ref{tab:wcr})
\\
\nonumber	
\mathsf{signature}~&\colon {\Sigma_{md,\bot}^d=\Sigma_{wcr,\bot}\cup~}\{  
\\
\nonumber
\mathsf{total~functions}~&\colon \frac{~^{\_}~}{\_}\, \colon  
\mathsf{Number} \times  \mathsf{Number} \to  \mathsf{Number} \}
\\
\nonumber
\mathsf{variables}~&\colon x,y \colon  \mathsf{Number}
\\[2mm]
\label{rep1}
\tag{cm1}
\frac xy&\eqc x\cdot\frac 1y
\\[1mm]
\label{rep2}
\tag{cm2}
\frac xx&\eqc 1+\frac 0x
\\[1mm]
\label{rep3}
\tag{cm3}
\frac 1{x\cdot y}&\eqc \frac 1x\cdot\frac 1y
\\[1mm]
\label{rep4}
\tag{cm4}
\frac 1{1+(0\cdot x)}&\eqc 1+(0\cdot x)
\\[1mm]
\label{cm15}
\tag{cm5}
\bot &= \frac{1}{0}	
\end{align}
\hrule
\end{table}

\begin{dfn}
\label{def:cm}
A \textbf{common meadow} is an enlargement $F_\bot$ of a field $F$, which results by first extending 
the domain with an absorptive element $\bot$ and then expanding the structure thus obtained 
with a constant $\bot$ (for said absorptive element) and a division function which is made total 
by adopting 
\[\dfrac{x}{0} =\dfrac{x}{\bot} = \dfrac{\bot}{x}= \bot.
\]
\end{dfn}

A common meadow provides arguably the most straightforward way to turn division into a total operator. 
The fracterm  calculus of common meadows (as discussed in~\cite{BergstraP2021} and  in~\cite{BergstraP2016}) 
has many different axiomatisations, see e.g.~\cite{BergstraT2022CJ}. 
Here we combine axioms~\eqref{cm1}--\eqref{cm10} of $E_{wcr,\bot}$ (Table~\ref{tab:wcr})
and axioms~\eqref{rep1}--\eqref{cm15} of Table~\ref{tab:cm} that define division as a total function 
for each structure containing $\bot$ as an absorptive element.
Table~\ref{tab:cm} lists a set $\FTCcm$ of axioms for FTCcm
following the presentation of~\cite{BergstraP2021}, though using fracterms instead of inverse notation
$x^{-1}$ for $\frac 1x$.

\bigskip

With \emph{Prover9}~\cite{Prover9} it quickly follows 
that axiom~\eqref{eq:0xx=0x} is derivable from $\FTCcm$, which, according to \emph{Mace4}~\cite{Prover9},
is a set of independent axioms. 
Also with \emph{Prover9}, the following familiar consequences 
of $\FTCcm$ are quickly derived:
\[\frac x1 =  x, 
~{-}\frac{x}{y} =  \frac{-x }{y } = \frac{x }{-y }, 
~~\frac{x}{y}	\cdot \frac{u}{v} =  \frac{x \cdot u}{y \cdot v}, 
~~\text{and}~
~\frac xy +\frac uv =\frac{x\cdot v+y\cdot u}{y\cdot v}~.\]

The axioms of $\FTCcm$ allow fracterm flattening: each expression can be proven equal to a flat 
fracterm.
This was first shown in~\cite[Prop.2.2.3]{BergstraP2021} 
with inverse notation $x^{-1}$ for $\frac 1x$ (conversely, division can be defined by $\frac xy = x\cdot y^{-1}$).

\subsection{$\bot$-Enlargement: application of $\psi_{\mathsf{true}}()$ to FTCpm}
\label{sec:5.2}

First, we establish some properties of common meadows. 
By Definition~\ref{def:cm},
\begin{align}
\label{eq:cm0}
&
0\ne\bot
~~~\text{and}~~~
1\ne\bot,
\\
\label{eq:cm1}
&{-(\bot})=\bot, ~ x+\bot=\bot+x=\bot, ~ x\cdot\bot=\bot\cdot x=\bot,
~\text{and}~\mbox{ $\dfrac  x\bot=\dfrac \bot x=\bot$},
\\[-.6mm]
\label{eq:cm3}
&\frac x0=\bot.
\end{align}
Moreover, since $\bot$ is absorbing,
it easily follows that
\begin{align}
\label{eq:cm5}
&{-x}=\bot\to x=\bot, ~~x+y=\bot\to (x=\bot\vee y=\bot), \text{~and}
&x\cdot y =\bot\to(x=\bot\vee y=\bot).
\end{align}
The following result implies that our axiomatisation 
$\FTCpm$ of partial meadows (Table~\ref{tab:FTCpm}) is sufficiently strong.
We write $\FTCpmc$ for the axioms of partial meadows as represented in 
Table~\ref{tab:FTCpm2b}, thus with all universal quantifications made explicit, e.g.
\(
\forallp x.\forallp y.\forallp z.(x+y)+z \eqp x+(y+z).
\)

\begin{theorem}
\label{thm:pmeqcm}
With \eqref{eq:cm0}--\eqref{eq:cm5} it follows that~
$\psi_{\mathsf{true}}(\FTCpmc)\vdash \FTCcm$~ and that\\
$~~\psi_{\mathsf{true}}(\FTCpmc)$~
axiomatises a common meadow.
\end{theorem}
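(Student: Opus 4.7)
The plan is to verify by a systematic case analysis that each axiom of $\FTCcm$ is a first-order consequence of the translations $\psi_{\mathsf{true}}(\alpha)$ of the axioms $\alpha$ of $\FTCpmc$ together with the auxiliary identities \eqref{eq:cm0}--\eqref{eq:cm5}. The first step is to compute these translations explicitly. For a universally closed equational axiom $\forallp\bar x.\:t\eqp r$, the translation yields
\[
\forall\bar x.\;\big(\textstyle\bigwedge_i x_i\ne\bot\;\to\;(t\ne\bot\wedge r\ne\bot\wedge t=r)\big),
\]
while for the conditional axioms \eqref{pm8c}, \eqref{pm9c}, and \eqref{pm11c}, unfolding $\leftimp$ as $\neg(\cdot)\leftor(\cdot)$ and simplifying with $0\ne\bot$ yields implications guarded additionally by $\ne 0$ side conditions.

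Once the conditional translations are in hand, each axiom of $\FTCcm$ is established by a case split on whether the relevant arguments lie in $\{\bot\}$, in $\{0\}$, or in the ``nonzero defined'' part of the domain. For the unconditional axioms \eqref{cm1}--\eqref{cm3}, \eqref{cm5}--\eqref{cm8}, \eqref{cm9}, and \eqref{cm10}, the ``all $\ne\bot$'' case is immediate from the corresponding translation, while the cases with some variable equal to $\bot$ force both sides to $\bot$ via the absorption identities \eqref{eq:cm1}. The division axioms \eqref{rep1}, \eqref{rep2}, \eqref{rep4}, and \eqref{cm15} admit finer splits, now including the ``denominator equals $0$'' case, which collapses both sides to $\bot$ via \eqref{eq:cm3} together with \eqref{eq:cm1}; the remaining nonzero defined case follows from the translation of the corresponding axiom of $\FTCpmc$.

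The main obstacles are axioms \eqref{cm4} and \eqref{rep3}, neither of which can be read off any single $\FTCpmc$-axiom. For \eqref{cm4}, $x+(-x)=0\cdot x$, the case $x=\bot$ uses absorption, while for $x\ne\bot$ one must first derive $0\cdot x=0$: from the translated distributivity, $0\cdot x=(0+0)\cdot x=0\cdot x+0\cdot x$, and since \eqref{eq:cm5} guarantees $0\cdot x\ne\bot$, cancellation via the translated \eqref{pm3c} yields $0\cdot x=0$; combining this with the translated \eqref{pm3c} $x+(-x)=0$ closes the case. For \eqref{rep3}, $\frac 1{x\cdot y}=\frac 1x\cdot\frac 1y$, I would first reprove within $\psi_{\mathsf{true}}(\FTCpmc)$ the conditional form $(x\ne 0\wedge y\ne 0)\to\frac 1{x\cdot y}=\frac 1x\cdot\frac 1y$ by translating the chain of deductions in the proof of Assertion~\eqref{eq:prod} in Proposition~\ref{prop:pm}; the $\{0,\bot\}$ cases then reduce both sides to $\bot$ via \eqref{eq:cm3}, \eqref{eq:cm1}, and \eqref{eq:cm5}. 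Having to internally reprove such derived consequences of $\FTCpmc$ inside $\psi_{\mathsf{true}}(\FTCpmc)$ is the principal technical hurdle.

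For the second claim, that $\psi_{\mathsf{true}}(\FTCpmc)$ axiomatises common meadows, I would combine the first part with Proposition~\ref{correspondence:2}. Any model $B$ of $\psi_{\mathsf{true}}(\FTCpmc)$ is a model of $\FTCcm$ by the first part, and since $\FTCcm$ axiomatises common meadows (cf.~\cite{BergstraP2021,BergstraT2022CJ}), $B$ is a common meadow. Conversely, every common meadow is of the form $\mathsf{Enl}_\bot(F)$ for some partial meadow $F$ (Definition~\ref{def:cm}); since $F\models_\Sigma\alpha$ for each axiom $\alpha$ of $\FTCpmc$, Proposition~\ref{correspondence:2} lifts this to $\mathsf{Enl}_\bot(F)\models\psi_{\mathsf{true}}(\alpha)$, so every common meadow is a model of $\psi_{\mathsf{true}}(\FTCpmc)$.
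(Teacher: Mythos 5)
Your strategy is essentially the paper's: compute the $\psi_{\mathsf{true}}$-translations of the \FTCpmc-axioms, simplify them with \eqref{eq:cm0}--\eqref{eq:cm5}, and then derive each \FTCcm-axiom by a case split on $\bot$, $0$, and the nonzero defined part, with \eqref{cm4} handled via $0\cdot x=0$ and \eqref{rep3} via the translated form of Assertion~\eqref{eq:prod}. Two points deserve correction. First, you list \eqref{cm2} and \eqref{cm9} among the axioms whose ``all $\ne\bot$'' case is \emph{immediate from the corresponding translation}, but \FTCpmc\ contains no commutativity axiom for $+$ and no axiom $-(-x)\eqp x$; the paper obtains \eqref{cm2} from the translation of the \emph{derived} Assertion~\eqref{pmComm+} of Proposition~\ref{prop:pm} (just as \eqref{cm4} uses Assertion~\eqref{eq:0}), and obtains \eqref{cm9} by a short computation from the translation of \eqref{pm3c}. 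So these two belong with the ``internally reprove a derived consequence'' cases you already identify, not with the immediate ones. Second, for the claim that $\psi_{\mathsf{true}}(\FTCpmc)$ axiomatises a common meadow, your direction ``every common meadow models $\psi_{\mathsf{true}}(\FTCpmc)$'' via Proposition~\ref{correspondence:2} is a sound and useful addition, but the step ``any model of \FTCcm\ is a common meadow'' overstates what \FTCcm\ provides: it is complete for the \emph{equational theory} of common meadows, while its model class is strictly larger (e.g.\ enlargements of non-field rings, or the trivial algebra). The paper instead simply observes that the inequation $0\ne 1$ is additionally derivable from the translation of \eqref{pm10C}, and rests the claim on deriving a known axiomatisation; if you want a model-class statement you would have to argue separately which first-order properties of fields the translated conditional axioms \eqref{pm9C} and \eqref{pm11C} enforce.
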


\begin{table}
\caption{transformations
of the axioms of \FTCpmc\ (Table~\ref{tab:FTCpm2b}) 
and Assertion~\eqref{pmComm+} 
(Table~\ref{FTCpm3}) 
\\
\phantom{Table~\ref{tab:psipm}\hspace{1.4mm}}
to assertions about common meadows, 
with applications of properties~\eqref{eq:cm0}--\eqref{eq:cm5}}
\vspace{1.6mm} 
\label{tab:psipm}
\centering
\hrule
\begin{align}
\nonumber
\eqref{pm1C}{:}\quad 
&\hspace{-1mm}
\psitr(\forallp x.\forallp y.\forallp z.~(x+y)+z\eqp x+(y+z))
~\leftrightarrow~((x\ne\bot\wedge y\ne\bot\wedge z\ne\bot)\to
\\
\nonumber
&
\quad((x+y)+z\ne\bot\wedge x+(y+z)\ne\bot \wedge(x+y)+z=x+(y+z)))
\\
\label{cmpm1}
&\quad\leftrightarrow((x\ne\bot\wedge y\ne\bot\wedge z\ne\bot)\to (x+y)+z=x+(y+z))
\quad\text{(by~\eqref{eq:cm5})},
\\[1.42mm]
\nonumber
\eqref{pmComm+}{:}\quad 
&\psitr(\forallp x.\forallp y.~x+y\eqp y+x)
\\
\nonumber
&\quad
\leftrightarrow((x\ne\bot\wedge y\ne\bot)\to (x+y\ne\bot\wedge y+x\ne\bot \wedge x+y=y+x))
\\
\label{cmpm2}
&\quad\leftrightarrow((x\ne\bot\wedge y\ne\bot)\to x+y=y+x),
\\[1.42mm]
\nonumber
\eqref{pm2C}{:}\quad 
&\psitr(\forallp x.~x+0\eqp x)
~\equiv~ (x\ne\bot \to( x+0\ne\bot\wedge x\ne\bot \wedge x+0=x))
\\
\label{cmpm3}
&\quad\leftrightarrow (x\ne\bot \to x+0=x),
\\[1.42mm]
\nonumber
\eqref{pm3C}{:}\quad 
&\psitr(\forallp x.~x+(-x)\eqp 0)
\\
\nonumber
&\quad
\equiv (x\ne\bot \to(x+(-x)\ne\bot\wedge 0\ne\bot \wedge x+(-x)=0))
\\
\label{cmpm4}
&\quad\leftrightarrow( x\ne\bot \to  x+(-x)=0 ),
\\[1.42mm]
\nonumber
\eqref{pm4C}{:}\quad 
&\psitr(\forallp x.\forallp y.\forallp z.~(x\cdot y)\cdot z\eqp x\cdot (y\cdot z))
~\leftrightarrow~((x\ne\bot\wedge y\ne\bot\wedge z\ne\bot)\to\\
\nonumber
&\quad((x\cdot y)\cdot z\ne\bot\wedge x\cdot (y\cdot z)\ne\bot\wedge(x\cdot y)\cdot z=x\cdot (y\cdot z)))
\\
\label{cmpm5}
&\quad\leftrightarrow((x\ne\bot\wedge y\ne\bot\wedge z\ne\bot)\to(x\cdot y)\cdot z=x\cdot (y\cdot z)),
\\[1.42mm]
\nonumber
\eqref{pm5C}{:}\quad 
&\psitr(\forallp x.\forallp y.~x\cdot y\eqp y\cdot x)
\\
\nonumber
&\quad\leftrightarrow((x\ne\bot\wedge y\ne\bot)\to(x\cdot y\ne\bot\wedge y\cdot x\ne\bot\wedge x\cdot y=y\cdot x))
\\
\label{cmpm6}
&\quad\leftrightarrow((x\ne\bot\wedge y\ne\bot)\to x\cdot y=y\cdot x),
\\[1.42mm]
\nonumber
\eqref{pm6C}{:}\quad 
&\psitr(\forallp x.~1\cdot x\eqp x)
~\equiv~(x\ne\bot\to(1\cdot x\ne\bot\wedge x\ne\bot\wedge 1\cdot x=x))
\\
\label{cmpm7}
&\quad\leftrightarrow (x\ne\bot\to 1\cdot x=x),
\\[1.42mm]
\nonumber
\eqref{pm7C}{:}\quad 
&\psitr(\forallp x.\forallp y.\forallp z.~x \cdot (y+z) \eqp (x \cdot y) + (x \cdot z))
\\
\nonumber
&\quad\leftrightarrow~((x\ne\bot\wedge y\ne\bot\wedge z\ne\bot)\to\\
\nonumber
&\qquad\quad~
( x\cdot(y+z)\ne\bot\wedge(x\cdot y)+(x\cdot z)\ne\bot\wedge x \cdot (y+z) = (x \cdot y) + (x \cdot z)))
\\
\label{cmpm8}
&\quad\leftrightarrow((x\ne\bot\wedge y\ne\bot\wedge z\ne\bot)\to x \cdot (y+z) = (x \cdot y) + (x \cdot z)),
\\[1.42mm]
\nonumber
\eqref{pm8C}{:}\quad 
&\psitr(\forallp x.\forallp y.~y\nep 0\leftimp \frac{x}y\eqp x\cdot \frac 1y)
\equiv(\psitr(\forallp x.\forallp y.y\eqp 0\leftor \frac{x}y\eqp x\cdot \frac 1y))
\\
\nonumber
&\quad\leftrightarrow((x\ne\bot\wedge y\ne \bot)\to( y=0 \vee(y\ne 0\wedge \frac{x}y\ne\bot \wedge x\cdot \frac 1y\ne\bot \wedge \frac{x}y=x\cdot \frac 1y)))
\\
\label{cmpm11}
&\quad\leftrightarrow((y\ne\bot\wedge y\ne 0)\to \frac{x}y=x\cdot \frac 1y),
\\[1.42mm]
\nonumber
\eqref{pm9C}{:}\quad 
&\psitr(\forallp x.~x\nep 0\leftimp \frac xx\eqp 1)
\equiv(\psitr(\forallp x.x\eqp 0\leftor \frac xx\eqp 1))
\\
\nonumber
&\quad\leftrightarrow( x\ne\bot\to(x= 0\vee(x\ne 0\wedge\frac xx\ne\bot\wedge\frac xx= 1)))
\\
\label{cmpm12} 
&\quad\leftrightarrow((x\ne\bot\wedge x\ne 0) \to\frac xx= 1),
\\[1.42mm]
\eqref{pm10C}{:}\quad
&~\psitr(0\nep 1)
\equiv\psifa(0\eqp 1)
\equiv (0=\bot \vee 1=\bot \vee 0\ne 1)
\label{cmpm9}
\leftrightarrow( 0\ne 1).
\end{align}
\hrule
\end{table}

\begin{proof}
\label{page23}
In~Table~\ref{tab:psipm} we display the $\psitr$-transformations of~\eqref{pm1C}--\eqref{pm10C}
of \FTCpmc\ (Table~\ref{tab:FTCpm2b}) and Assertion~\eqref{pmComm+} (Proposition~\ref{prop:pm}), 
while omitting universal quantification in the resulting formulae, which are further simplified
using~\eqref{eq:cm0}--\eqref{eq:cm5}. 

Below we show that the fifteen equational axioms of $\FTCcm$ are consequences of these resulting formulae and  
\eqref{eq:cm0}--\eqref{eq:cm5}. 
Hence, $\psitr(\FTCpmc)\vdash \FTCcm$. 
Since the inequation $0\ne 1$
is also a consequence of one of these resulting formulae, namely~\eqref{cmpm9}, 
$\psitr(\FTCpmc)$ axiomatises a common meadow.

We adopt the convention that $\cdot$ binds stronger than $+$. 
\\[2.04mm]
\underline{Axiom~\eqref{cm1}}. 
By \eqref{eq:cm1}, $(x=\bot \vee y=\bot \vee z=\bot)\to (x+y)+z=x+(y+z)$, hence by \eqref{cmpm1},
$(x+y)+z=x+(y+z)$.
\\[2.04mm]
\underline{Axiom~\eqref{cm2}}. By~\eqref{eq:cm1} and~\eqref{cmpm7},
$(x=\bot \vee y=\bot)\to x+y=y+x$. By 
Proposition~\ref{prop:pm}.\eqref{pmComm+} we may use~\eqref{cmpm2},  
hence $x+y=y+x$.
\\[2.04mm]
\underline{Axiom~\eqref{cm3}}. By \eqref{eq:cm1}, $x=\bot \to x+0=x$, hence by \eqref{cmpm3}, $x+0=x$.
\\[2.04mm]
\underline{Axiom~\eqref{cm4}}. By \eqref{cmpm4}, $x\ne\bot\to x+(-x)=0$. By Prop.\ref{prop:pm}\eqref{eq:0} we 
may use $\psitr(\forallp x.0\cdot x\eqp 0)$, 
and with $0\ne\bot$ \eqref{eq:cm0} this yields
\begin{equation}
\label{0x=0}
x\ne\bot\to 0\cdot x = 0,
\end{equation}  
hence $x\ne\bot\to x+(-x) = 0\cdot x$. 
By \eqref{eq:cm1}, $x=\bot\to x+(-x) = 0\cdot x$. 
Hence $x+(-x) = 0\cdot x$. 
\\[2.04mm]
\underline{Axiom~\eqref{cm5}}. Using~\eqref{cmpm5} and~\eqref{cmpm6},
similar to the case for Axiom~\eqref{cm1}, hence 
$(x\cdot y)\cdot z=x\cdot (y\cdot z)$.
\\[2.04mm]
\underline{Axiom~\eqref{cm6}}. Using~\eqref{cmpm5} and~\eqref{cmpm6},
similar to the case for Axiom~\eqref{cm1}, hence 
$x\cdot y=y\cdot x$.
\\[2.04mm]
\underline{Axiom~\eqref{cm7}}. By~\eqref{eq:cm1}, $x=\bot\to 1\cdot x=\bot$, hence by~\eqref{cmpm7}, $1\cdot x=x$.
\\[2.04mm]
\underline{Axiom~\eqref{cm8}}. By similar reasoning using~\eqref{cmpm8}, $x\cdot (y+z)=(x\cdot y)+(x\cdot z)$.
\\[2.04mm]
\underline{Axiom~\eqref{cm9}}. 
By~\eqref{eq:cm1} and~\eqref{eq:cm5}, $x\ne\bot\leftrightarrow -x\ne \bot$. 
So, by \eqref{cmpm4},
$x\ne\bot\leftrightarrow -x+-(-x)=0$, hence
$x\ne\bot\to x=x+(-x+-(-x))=(x+(-x))+-(-x)=-(-x)$. 
By~\eqref{eq:cm1}, $x=\bot\to x=-(-x)$. 
Hence $-(-x)=x$.
\\[2.04mm]
\underline{Axiom~\eqref{cm10}}. By~\eqref{eq:cm1}, $x+\bot=\bot$.
\\[2.04mm]
\underline{Axiom~\eqref{rep1}}.  By~\eqref{cmpm11}, $(y\ne\bot\wedge y\ne 0)\to \frac xy=x\cdot \frac 1y$
and by~\eqref{eq:cm1} and~\eqref{eq:cm3}, 
$(y=\bot \vee y= 0)\to$ 
$\frac xy=x\cdot \frac 1y$,
hence $\frac xy=x\cdot \frac 1y$.
\\[2.04mm]
\underline{Axiom~\eqref{rep2}}. By~\eqref{eq:cm1} and~\eqref{eq:cm3}, $\frac 1x = \bot\to(x=\bot \vee x=0)$,
hence $(x\ne\bot\wedge x\ne 0)\to$ $\frac 1x\ne\bot$. 
Under this last condition, by~\eqref{0x=0}, $0\cdot\frac 1x=0$ and by~\eqref{rep1}, $0\cdot \frac 1x=\frac 0x$. 
Hence by \eqref{cmpm12}, 
$\frac{x}{x} = 1+0=1+\frac 0x$.
\\[2.04mm]
\underline{Axiom~\eqref{rep3}}. By Proposition~\ref{prop:pm},
\(\psitr(\forallp y.\forallp v.(y \nep 0  \leftand  v\nep 0)\leftimp 
\tfrac xy\cdot \tfrac uv\eqp \tfrac{x\cdot u}{y\cdot v}),
\)
thus 
\begin{equation}
\label{eq:rep3}
(y \ne \bot \wedge v\ne \bot\wedge y\ne 0 \wedge  v\ne 0)\to\tfrac xy\cdot \tfrac uv= \tfrac{x\cdot u}{y\cdot v},
\end{equation}
hence 
$(x \ne \bot \wedge y\ne \bot \wedge x\ne 0 \wedge  y\ne 0)\to\frac 1{x\cdot y} = \frac 1x \cdot \frac 1y$.
By~\eqref{eq:cm1} and~\eqref{eq:cm3}, also
\\
$(x=\bot\vee y=\bot\vee x=0\vee y=0)\to \frac 1{x\cdot y} =\frac 1x \cdot \frac 1y$.
Hence $\frac 1{x\cdot y} = \frac 1x \cdot \frac 1y$.
\\[2.04mm]
\underline{Axiom~\eqref{rep4}}. 
By~\eqref{eq:cm0}, \eqref{eq:cm1} and~\eqref{eq:cm5},  $x\ne\bot\leftrightarrow 1+0\cdot x\ne \bot$.
By~\eqref{cmpm9} and~\eqref{0x=0},  
$1+0\cdot x\ne 0$.\footnote{%
  By~\eqref{cmpm9}
  and~\eqref{0x=0}, 
  $x\ne\bot\to 0\ne 1=1+0\cdot x$ and by \eqref{eq:cm0}--\eqref{eq:cm1},
  $x=\bot\to 0\ne 1+0\cdot x$. Hence, $0\ne 1+0\cdot x$.
  } 
Under these conditions, it follows by~\eqref{cmpm12} that
$\frac{1+0\cdot x}{1+0\cdot x}=1$.
By~ \eqref{eq:cm1} and~\eqref{cmpm7}, $1\cdot x=x$.
By~\eqref{eq:cm1} and~\eqref{eq:cm5}, $x\ne\bot\leftrightarrow x\cdot x\ne \bot$, so
by~\eqref{0x=0}, 
$x\ne\bot\to 0\cdot(x\cdot x)=0$, and hence 
by~\eqref{0x=0},
$x\ne\bot\to 0\cdot x = 0\cdot(x\cdot x)$.
By~\eqref{eq:cm1},
$x=\bot\to 0\cdot x = 0\cdot(x\cdot x)$, so $0\cdot x=0\cdot(x\cdot x)$.
\\[.8mm]
With $1\cdot x=x$, $0\cdot x=0\cdot(x\cdot x)$, and $0\cdot x+0\cdot x=(0+0)\cdot x=0\cdot x$ derive
$(1+0\cdot x)\cdot(1+0\cdot x)=1+0\cdot x$ and therewith
$
(1+0\cdot x)
= \frac{1+0\cdot x}{1+0\cdot x}\cdot\frac{1+0\cdot x}1
\stackrel{\eqref{eq:rep3}}= \frac{(1+0\cdot x)\cdot(1+0\cdot x)}{1+0\cdot x}
= \frac{1+0\cdot x}{(1+0\cdot x)\cdot(1+0\cdot x)}
\stackrel{\eqref{eq:rep3}}= \frac{1+0\cdot x}{1+0\cdot x}\cdot\frac1{1+0\cdot x}
= \frac 1{1+0\cdot x}$.
\\[.8mm]
By~\eqref{eq:cm1}, $x=\bot\to 1+0\cdot x=\frac1{1+0\cdot x}$. Hence 
$1+0\cdot x=\frac1{1+0\cdot x}$.
\\[2.04mm]
\underline{Axiom~\eqref{cm15}}. By~\eqref{eq:cm3}, $\bot=\frac 10$.
\end{proof} 

In the above proof, axiom~\eqref{pm11C} of \FTCpmc, 
i.e., 
\(
\forallp x.\forallp y.\:
(x\nep 0\leftand y\nep 0)
\leftimp x\cdot y \nep 0,
\)
is not used. We will return to this point in Section~\ref{sec:6}, where we address the 
question whether the axioms of $\FTCpm$ are independent. 

\section{Concluding remarks}
\label{sec:6}

Short-circuit logics (SCLs) were introduced in~\cite{BPS13} and are distinguished by 
sequential connectives that prescribe left-to-right (sequential) evaluation of their operands,
in particular, $\fa\leftand x=\fa$, while $x\leftand\fa= \fa$ is not necessarily true (in the case of 
a partial meadow, take $(\frac10 \eqp 0)$ for $x$). 

Depending on the strength of possible atomic side-effects, different \emph{short-circuit} logics were
defined and axiomatised, both for the two-valued and three-valued case, see~\cite{PonseS2018,BPS21}. 
In~\cite{BP23}, three-valued \CLSCL\ is introduced, which differs from Guzmán and Squier's Conditional 
logic~\cite{GS90} only by the use of sequential connectives.
\CLSCL\  with only the constants \tr\ and \fa\ and none for the value \undefi\ (\CLSCLtwo)
is also introduced in~\cite{BP23} and
most closely resembles propositional logic: side-effects are not modelled and full left-sequential
conjuction
$\fulland$, definable by $x\fulland y = (x\leftand y)\leftor (y\leftand x)$, is commutative.
Moreover, adding $x\leftand\fa= \fa$ to \CLSCLtwo\ yields a sequential version of propositional logic
and excludes the use of a third truth value \undefi\ (because with a constant \und\ for 
undefined, it would follow that $\und=\und\leftand\fa=\fa$).

Starting from \CLSCLtwo\ and partial equality 
($\eqp$), we here introduced partial meadows together with axioms and rules split into a 
part for general, partial $\Sigma$-algebras (using rules for weak substitution
from~\cite{BergstraBTW81}) and a part specific to partial meadows (with signature $\Sigma_m^{pd}$). 
We have taken a pragmatic approach and included only axioms and rules that were used in our proofs.
The new quantifiers $\forallp$ and $\existsp$ were introduced for readability and comprehensibility, 
but could have been replaced by their familiar counterparts $\forall $ and $\exists$.

It is an open question whether the axioms of FTCpm ($\FTCpm$ in Table~\ref{tab:FTCpm2}) are independent.
It is certainly  the case that axioms~\eqref{pm1}--\eqref{pm7} are independent (\emph{Mace4})
and imply $x+y\eqp y+x$ (Prop.~\ref{prop:pm}), and it seems that the two axioms~\eqref{pm8} and~\eqref{pm9}
for division, i.e. 
\[y\nep 0\leftimp\frac xy\eqp x\cdot\frac 1y \quad\text{and}\quad x\nep 0\leftimp \frac xx\eqp 1,
\] are both mutually 
independent and also from the first seven. 
Axioms~\eqref{pm10} and \eqref{pm11}, i.e. $0\nep 1$ and
$(x\nep 0\leftand y\nep 0)\leftimp x\cdot y\nep 0$, express
that a partial meadow is an expansion of a field, and their independence  is not clear.

We expect that an extension of \FTCpm\ can be designed, including its proof system, for which a suitable 
completeness theorem can be obtained so that it becomes rewarding to investigate the model theory of 
these axioms, perhaps in a manner comparable to~\cite{DiasDinis2023,DiasDinis2024} where the model theory 
of the axioms for common meadows has been worked out in considerable detail. 

\bigskip

Partial data types for arithmetic arise in different ways, for instance the transreals of Anderson et 
al.~\cite{AndersonVA2007,dosReisGA2016} can be turned
into  partial  transreals where division is more often defined than in a partial meadow 
($\frac{1}{0} = +\infty$, and $\frac{-1}{0} = -\infty $, while $\frac{0}{0}$ is undefined) and where 
addition and multiplication are partial: $\infty + (-\infty) $ and $0 \cdot \infty$ are undefined 
(while in transreals $\infty + (-\infty) = 0 \cdot \infty =\Phi$). 
Just as in a common meadow where $\frac{1}{0} = \bot$ which we replace by undefined to obtain a partial 
meadow, viewing $\Phi$ as a representation of undefinedness leads to partial transreals and to its 
substructure of partial transrationals.

The entropic transreals of~\cite{BergstraT2025} are a modification of transreals which can be turned into partial 
entropic transreals where division and multiplication are total while addition is partial. 
Again the idea is to have $t$ undefined (in the partial entropic transreals) in case $t = \bot$ in entropic 
transreals. In the partial entropic transreals one has $\frac{1}{0} = +\infty$, 
$\frac{0}{0} = 0 \cdot \infty = 0$, while $ \infty + (-\infty) $ is undefined.
 
Wheels (see Carlström's~\cite{Carlstroem2004}) can be made partial also by having $t$ undefined if 
$t = \bot$ in a wheel. In a partial wheel division, addition, and multiplication are each partial. 
We notice that a wheel contains a single unsigned infinite element $\infty$ rather than a pair of 
signed infinite elements $+\infty$ and $-\infty$ (as in transreals and entropic transreals) and 
that in a partial wheel, $\infty + \infty$ is undefined.

\appendix
\section{Prover9 and Mace4 results}
\label{A.0}
The axioms of \CLe\ imply the following consequences, all of which follow easily using 
the theorem prover \emph{Prover9}~\cite{Prover9}:
\begin{description}
 \setlength\itemsep{-.4mm}
\item[\eqref{DNS}]
$\neg\neg \phi=\phi$,
\item[\eqref{(I)}]
$\phi\leftand \tr=\phi$ and $\tr\leftor \phi=\tr$,
\item[\eqref{(II)}]
$(\phi\leftand \psi)\leftand \xi=\phi\leftand(\psi\leftand \xi)$, so the connective $\leftand$ is associative, 
\item[\eqref{(III)}]
$\phi\leftand(\psi\leftand \phi)=\phi\leftand \psi$, so, with $\psi=\tr$, $\leftand$ is idempotent, 
i.e. $\phi\leftand \phi=\phi$,  
\item[\eqref{(IV)}]
$\phi\leftand (\psi\leftor \xi)=(\phi\leftand \psi)\leftor (\phi\leftand \xi)$, so 
$\leftand$ is left-distributive, 
\item[\eqref{(V)}]
$\phi\leftand \neg \phi=\neg \phi\leftand \phi$, 
\item[\eqref{(VI)}]
$(\phi\leftand \psi)\leftimp \xi=\phi\leftimp(\psi\leftimp \xi)$.
\end{description}
 On a Macbook Pro 
with a 2.4 GHz dual-core Intel Core i5 processor and 4 GB RAM, a proof of \eqref{DNS} 
with options \texttt{kbo} and \texttt{pass} requires 0.01~seconds, 
and with \eqref{DNS} added to $\CLe$, a proof of~\eqref{(II)} requires less than 18~seconds. 
With consequence \eqref{(II)} also added to $\CLe$, proofs of the other consequences need less than 1.5~seconds each.

Also Inference rules (i2)--(i5) follow quickly with help of
\emph{Prover9}~\cite{Prover9} (with constants $\mathtt{a}$, $\mathtt b$, $\mathtt c$, 
which then represent arbitrary instantiations).
A proof of (i5) with the default options (\texttt{lpo} and \texttt{unfold}) is the most time-consuming and 
takes less than 1 second.
A trivial proof of inference rule (i4) is as follows: 
if $\vdash \phi=\tr$ and $\vdash \psi=\tr$, then by axiom~\eqref{Tand}, 
$\phi\leftand \psi=\tr\leftand \psi=\psi=\tr$.

With the tool \emph{Mace4}~\cite{Prover9}, a model of \CLe\ 
that refutes $x\leftor y=y\leftor x$ 
is quickly obtained (``\texttt{seconds = 0}''). 
With a constant \und\ for the truth value \undefi, the counterexample in this model
amounts to $\und=\und\leftor\tr\ne\tr\leftor\und=\tr$. 

\end{document}